\newtheorem{theorem}{Theorem}
\newtheorem{example}{Example}
\newtheorem{remark}{Remark}
\def\bmtheta{{\bm{\theta}}}
\def\bmSigma{{\bm{\Sigma}}}
\def\bmmu{{\bm{\mu}}}
\newcommand{\R}{\mathbb{R}}
\newcommand{\transp}{{T}}
\renewcommand{\d}{\hbox{d}}
\begin{document}

\title[]{Dynamic intertemporal utility optimization by means of Riccati transformation of
Hamilton-Jacobi Bellman equation}

\author{So\v{n}a Kilianov\'a and Daniel \v{S}ev\v{c}ovi\v{c}
}

\address{Department of Appl. Mathematics and Statistics, FMFI, Comenius University in Bratislava, 842 48 Bratislava, Slovakia, {\tt kilianova@fmph.uniba.sk} (S.~Kilianov\'a), {\tt sevcovic@fmph.uniba.sk} (D.~\v{S}ev\v{c}ovi\v{c})
}

\begin{abstract}
In this paper we investigate a dynamic stochastic portfolio optimization problem involving both the expected terminal utility and intertemporal utility maximization. We solve the problem by means of a solution to a fully nonlinear evolutionary Hamilton-Jacobi-Bellman (HJB) equation. We propose the so-called Riccati method for transformation of the fully nonlinear HJB equation into a quasi-linear parabolic equation with non-local terms involving the  intertemporal utility function. As a numerical method we propose a semi-implicit scheme in time based on a finite volume approximation in the spatial variable. By analyzing an explicit traveling wave solution we show that the numerical method is of the second experimental order of convergence. As a practical application we compute optimal strategies for a portfolio investment problem motivated by market financial data of German DAX 30 Index and show the effect of considering intertemporal utility on optimal portfolio selection.

\medskip
\noindent AMS-MOS Classification: {35K55, 34E05, 70H20, 91B70, 90C15, 91B16}

\smallskip
\noindent Keywords: {Dynamic stochastic portfolio optimization, Dynamic utility, Hamilton-Jacobi-Bellman equation, Riccati  transformation, finite volume scheme.}

\end{abstract}

\maketitle

\section{Introduction}

In this paper, we investigate the impact of presence of a nontrivial intertemporal utility function on stochastic optimal portfolio selection problem. The problem can be formulated in terms of the expected terminal and intertemporal utility maximization problem, in which the underlying stochastic process is controlled by a vector of time-dependent weights of assets entering a financial portfolio. 

To solve the expected terminal and intertemporal utility maximization problem, we follow a methodology based on solving a fully nonlinear parabolic Hamilton-Jacobi-Bellman equation for the intermediate value function of the corresponding optimal control problem. A similar problem was investigated by Federico, Gassiat and Gozzi in \cite{Federico}, where they studied a problem of terminal and intertemporal utility maximization in an investment-consumption portfolio setting and the current utility being dependent also on the wealth process. They studied properties of solutions to a dual control problem.

The novelty of our paper is generalization of the transformation method proposed and analyzed by  Abe and Ishimura \cite{AI}, Ishimura and \v{S}ev\v{c}ovi\v{c} \cite{IshSev} and Kilianov\'a and \v{S}ev\v{c}ovi\v{c} \cite{KilianovaSevcovicANZIAM, KilianovaSevcovicKybernetika} for the case of a nontrivial intertemporal utility function. The transformation is also referred to as the Riccati transformation as it involves the ratio between the second and the first derivatives of the value function. The transformed function can be viewed as the absolute risk aversion coefficient of an investor. Secondly, we generalize the underlying stochastic process to more general processes with arbitrary drift and volatility functions. Such a general setting can include, in particular, processes arising in the so-called  worst-case portfolio optimization studied recently  by Ki\-lia\-no\-v\'a and Trnovsk\'a in \cite{KilianovaTrnovska}. In contrast to the problem involving the terminal utility maximization only (cf. \cite{AI},\cite{IshSev}, \cite{KilianovaSevcovicANZIAM}, \cite{KilianovaSevcovicKybernetika}), the resulting transformed equation is a non-local quasi-linear parabolic equation containing non-local terms involving the  intertemporal utility function. The non-local parabolic equation can be further  transformed into a coupled system of two quasi-linear local parabolic equations. We analyze these governing equations and show how their solutions are related to solving the original HJB equation. As a tool for solving the associated terminal and intertemporal utility maximization problem, we generalize the numerical method proposed by  Kilianov\'a and \v{S}ev\v{c}ovi\v{c} \cite{KilianovaSevcovicANZIAM, KilianovaSevcovicKybernetika} for the case when a non-local term appears in the quasi-linear parabolic equation. We furthermore derive a-priori lower and upper bounds of a solution which are given in terms of the risk aversion coefficients of the terminal and intertemporal utility functions. The main advantage of the Riccati transformation method is twofold. First, the transformed function has a practical representation and meaning as an intertemporal risk aversion of an investor and it is a  globally bounded function even in the case when the utility function is unbounded. Moreover, there are natural boundary conditions for a solution defined on  a truncated numerical domain. Secondly, the nonlinearity appearing as a diffusion function in the transformed equation can be computed in a fast and efficient way using modern tools of conic convex programming. 

As a practical application we compute optimal strategies for a portfolio investment problem motivated by the market financial data from German DAX 30 Index. We compare the optimal portfolio selection strategies for the case of absence of intertemporal utility maximization to the case when a non-trivial intertemporal utility is considered. We illustrate the effect of an intertemporal utility function  on the optimal portfolio selection. 

The paper is organized as follows. In the next section we introduce and discuss basic model assumptions made on the underlying stochastic process. The process of logarithmized portfolio wealth $x_t$ at time $t$ is controlled by a vector of weights $\bmtheta_t$ belonging to a compact convex subset of $\R^n$. A dynamic stochastic optimization problem with intertemporal utility is formulated in Section 3. Following the Bellman optimality principle, we present a fully nonlinear backward parabolic equation for the intermediate value function satisfying a given terminal condition. In Section 4 we present the so-called Riccati transformation of the value function, leading us from the fully nonlinear HJB equation to a single  quasi-linear parabolic equation in the divergent form containing a non-local term. We furthermore analyze qualitative properties of an auxiliary value function arising from a parametric convex programming problem. Existence of a classical H\"older smooth solution and its a-priori bounds are also derived in this section.
Section 5 is devoted to a numerical approximation scheme for solving the transformed quasi-linear parabolic function. The scheme is based on the finite volume approximation method involving dual finite volumes. We compare the numerical scheme to the fixed policy iteration method for solving HJB equations as investigated by e.g. by Huang et al. \cite{Huang2010},  Reisinger and Witte \cite{Reisinger}. Finally, in Section 6 we test the accuracy of the proposed numerical method on an explicit traveling wave example and compute the experimental order of convergence. We show that the experimental order of convergence is approximately two which indicates the second order of convergence of the numerical method. Subsequently, we apply the proposed method to optimal portfolio selection problem and present corresponding results with and without intertemporal utility maximization.  

\section{Underlying stochastic process with control} 

Throughout the paper we will assume  that the underlying stochastic process $\{x_t\}$ satisfies the stochastic differential equation (SDE)
\begin{equation}
\label{process_x}
d x_t = \mu(x_t, t, \bmtheta_t) dt + \sigma(x_t, t, \bmtheta_t) dW_t\,,
\end{equation}
where the control process $\{\bmtheta_t\}$ is adapted to the process $\{x_t\}$, $\{W_t\}$ is the standard one-dimensional Wiener process and functions $(x,t,\bmtheta)\mapsto\mu(x,t,\bmtheta)$ and $(x,t,\bmtheta )\mapsto\sigma(x,t,\bmtheta)^2$ are $C^{1,1}$ smooth in $x, t$ and $\bmtheta$ variable, i.e. their first derivatives are Lipschitz continuous functions.

\begin{remark}
The motivation for studying the SDE (\ref{process_x}) as an underlying stochastic process controlled by $\bmtheta$ arises from the stochastic dynamic optimal portfolio management. Let $x_t^i = \ln y_t^i$ denote the logarithm
of the asset value $y_t^i$ entering a portfolio consisting of $n$ assets with vector of weights $\bmtheta$. Then $d x_t^i = d y_t^i / y_t^i$ is the return on the asset $i$. Suppose that the process of each such a return is driven by
\[
d x_t^i = \mu^i dt + \sum_{k=1}^n \sigma^{ki} d W_t^k
\]
where $W_t^j$ is a one-dimensional Wiener process such that the increments $dW_t^j$ and $dW_t^i$ are independent for $j\not=i$. The mean return on the increment of the portfolio $x^\bmtheta=\sum_{i=1}^n \theta^i x^i$ with the vector of weights $\bmtheta$ is $\bmmu^T \bmtheta \, dt$ and its variance is equal to $\sum_{i,j,k=1}^n \theta^i \sigma^{ki}\sigma^{kj} \theta^j \, dt$.

Following Merton \cite{Merton1, Merton2} we can describe the stochastic process $x_t^\bmtheta$ by the following one-dimensional SDE of the form (\ref{process_x}):
\[
d x_t^\bmtheta = \bmmu^T \bmtheta \, dt + \sigma(\bmtheta) dW_t
\] 
where $W_t$ is the one-dimensional Wienner process, $\sigma(\bmtheta)^2 = \bmtheta^T \bmSigma  \bmtheta$ and $\bmSigma$ is the covariance matrix, $\bmSigma_{ij}= \sum_{k=1}^n \sigma^{ki}\sigma^{kj}$.
\end{remark}

\begin{example}
As an example of the stochastic process (\ref{process_x}), one can consider a portfolio optimization problem with regular cash inflow (e.g. pension planning). In this example, the volatility function is given by
\begin{equation}
\sigma(x,t,\bmtheta)^2 =\bmtheta^T \bmSigma  \bmtheta,
\label{volatility}
\end{equation}
where $\bmSigma$ is a positive definite covariance matrix of asset returns. The drift function is given by
\begin{equation}
\mu(x,t,\bmtheta) = \bmmu^T\bmtheta - \frac12 \sigma(x,t,\bmtheta)^2  +\varepsilon e^{-x} + r,
\label{drift}
\end{equation}
where $\bmmu$ is the vector of mean returns of assets, $\varepsilon$ is an inflow ($\varepsilon>0$)/outflow ($\varepsilon<0$) to/from the portfolio, $r\ge0$ is an interest rate of a risk-free bond. The stochastic process $\{x^\bmtheta_t\}$ controlled by $\{\bmtheta_t\}$ is a logarithmic transformation of the stochastic process $\{y_t^{\tilde\bmtheta} \}_{t\ge0}$ driven by the stochastic differential equation
\begin{equation}
d y_t^{\tilde\bmtheta} = (\varepsilon + (r +\mu({\tilde\bmtheta})) y_t^{\tilde\bmtheta})
d t + \sigma(\tilde\bmtheta) y_t^{\tilde\bmtheta} d W_t, \label{processYeps}
\end{equation}
where $\tilde\bmtheta(y,t) = \bmtheta(x,t)$ with $x=\ln y$  (cf. Kilianov\'a and {\v S}ev{\v c}ovi{\v c} \cite{KilianovaSevcovicANZIAM}).

Another example stems from the so-called worst-case portfolio optimization problem investigated by Kilianov\'a and Trnovsk\'a \cite{KilianovaTrnovska}. The volatility function is given by
\[
\sigma(x,t,\bmtheta)^2 = \max_{\bmSigma\in{\mathcal K}}\bmtheta^T \bmSigma  \bmtheta, 
\]
where $\mathcal K$ is an uncertainty convex set of positive definite covariance matrices. Typically, only a part of the covariance matrix is exactly prescribed while other entries are not precisely determined. For instance, if only the diagonal $d$ is known, we have ${\mathcal K}=\{\bmSigma\succ0, \ diag(\bmSigma)=d\}$. The drift function is given by
\[
\mu(x,t,\bmtheta) = \min_{\bmmu\in{\mathcal E}}\bmmu^T\bmtheta - \frac12 \sigma(x,t,\bmtheta)^2  +\varepsilon e^{-x} +r,
\]
where $\mathcal E$ is a given uncertainty convex set of mean returns. 
    
\end{example}

\section{Dynamic stochastic optimization problem with intertemporal utility function}

Our goal is to extend the model of terminal utility maximization studied previously in Kilianov\'a and {\v S}ev{\v c}ovi{\v c} \cite{KilianovaSevcovicANZIAM} by including an intertemporal utility function. Maximization of dynamic utility has been investigated in vast literature in the past by a number of methods. In this paper, we assume that the investor has a certain utility $c$ from intertemporal wealth but a different utility $u$ from terminal wealth. We assume the overall utility to be time-additive. Then we can formulate the problem of dynamic utility maximization as follows:
\begin{equation}
\max_{\bmtheta|_{[0,T)}} \mathbb{E}
\left[u(x_T^\bmtheta)\, + \int_0^T  c(x_s^\bmtheta, s) ds \big| \, x_0^\bmtheta=x_0 \right],
\label{maxproblem}
\end{equation}
(c.f. \cite{Federico} where they included consumption as well). Here $\{x_t^{\bmtheta}\}$ is It\=o's stochastic process of the form (\ref{process_x}) on a finite time horizon $[0,T]$, $u: \mathbb{R} \to \mathbb{R}$ is a given terminal utility function and $x_0$ a given initial state condition of $\{x_t^\bmtheta\}$ at $t=0$. The function $\bmtheta:   \mathbb{R} \times [0,T) \to \R^n$ maps $(x,t) \mapsto \bmtheta(x,t)$ and it represents an unknown control function governing the underlying stochastic process $\{x_t^\bmtheta\}$. The function $c:\mathbb{R} \times [0,T)\to \mathbb{R}$  is the intertemporal utility function. In what follows, we will assume $c$ is a $C^2$ smooth function and it is non-decreasing in the $x$ variable. Clearly, one can add a time discounting factor to both utility functions in (\ref{maxproblem}). We assume that the control parameter $\bmtheta$ belongs to a closed convex subset $\Delta$ of the compact simplex $\mathcal{S}^n = \{\bmtheta \in \mathbb{R}^n\  |\  \bmtheta \ge \mathbf{0}, \mathbf{1}^\transp \bmtheta = 1\} \subset \mathbb{R}^n$, where $\mathbf{1} = (1,\cdots,1)^\transp \in \mathbb{R}^n$.

If we introduce the value function
\begin{equation}
V(x,t):= \sup_{  \bmtheta|_{[t,T)}} 
\mathbb{E}\left[u(x_T^\bmtheta) + \int_t^T c(x_s^\bmtheta, s) ds| x_t^\bmtheta=x \right]
\end{equation}
then $V(x,T):=u(x)$. Following Bertsekas \cite{Bertsekas}, the value function $V=V(x,t)$ satisfies the fully nonlinear Hamilton-Jacobi-Bellman (HJB) parabolic equation: 
\begin{eqnarray}
&& \partial_t V + \max_{ \bmtheta \in \Delta} 
\left(
\mu(x,t,\bmtheta)\, \partial_x V 
+ \frac{1}{2} \sigma(x,t,\bmtheta)^2\, \partial_x^2 V \right)  + c(x,t) = 0\,, 
\nonumber
\\
&& V(x,T)=u(x),  \label{eq_HJB}
\end{eqnarray}
for $ (x,t)\in\R\times [0,T)$; see also \cite{KilianovaSevcovicANZIAM} and Kossaczk\'y, Ehrhardt, G\"unther \cite{KossaczkyEQ, KossaczkyNMTMA}.

As an example of the terminal utility function, one can consider e.g. a constant absolute risk aversion (CARA) function:
\[
u(x) = - e^{-a x},
\]
with constant absolute risk aversion $a\equiv a(x)>0$, where 
\[
a(x) = -\frac{u''(x)}{u'(x)} \quad \hbox{for}\ x\in\R.
\]
We note that a CARA utility function in the variable $x$ corresponds to a CRRA (constant relative risk aversion) utility function $u(y) = -y^{-a}$ in the variable $y=e^x$. In practical applications, $y$ can stand for a portfolio value and $x=\ln y$ its log-transform, for which \eqref{process_x} holds.

Another choice for the utility function $u$ can be a decreasing/increasing absolute risk aversion (DARA/IARA) function with $a(x)$ decreasing/increasing in the $x$ variable. Typically, the intertemporal utility function $c$ is a non-decreasing concave discounted function, i.e. 
\begin{equation}
\label{eq:consumption}
c(x,t) =  -\kappa e^{-d x - \varrho(T-t)},
\end{equation}
where $\kappa, d\ge 0$ and $\varrho$ is a discounting factor. We note that including a discount factor $e^{-rT}$ into the terminal utility function $u$ does not play any role in the solution, as one can transfer this constant into the coefficient $\kappa$ of the intertemporal utility function $c$ simply by multiplying \eqref{maxproblem} by $e^{rT}$.

%====================================================
\section{The Riccati transformation of the HJB equation with  intertemporal utility function}
\label{sec:HJB}

Following the papers by Abe and Ishimura \cite{AI}, Ishimura and \v{S}ev\v{c}ovi\v{c} \cite{IshSev} and Kilianov\'a and \v{S}ev\v{c}ovi\v{c} \cite{KilianovaSevcovicANZIAM}, the Riccati transformation of the value function $V$ can be introduced as follows:

\begin{equation}
\varphi(x,\tau) = - \frac{\partial_x^2 V(x,t)}{\partial_x V(x,t)}, \quad\hbox{where}\ \ \tau=T-t.
\label{eq_varphi}
\end{equation}

Suppose for a moment that the value function $V(x,t)$ is increasing in the $x$-variable. This is a natural assumption in the case when the terminal utility function $u(x)$ is increasing itself. Then the HJB equation (\ref{eq_HJB}) can be rewritten as follows:
\begin{equation}
\partial_t V - \alpha(\cdot,\varphi) \partial_x V + c = 0, \qquad V(\cdot ,T)=u(\cdot),
\label{eq_HJBtransf}
\end{equation}
where $\alpha(x,\tau,\varphi)$ is the value function of the following parametric optimization problem:
\begin{equation}
\alpha(x,\tau,\varphi) = \min_{ \bmtheta \in \Delta} 
\left(
-\mu(x,t,\bmtheta) +  \frac{\varphi}{2}\sigma(x,t,\bmtheta)^2\right), \quad \tau=T-t\,.
\label{eq_alpha_def}
\end{equation}

The proof of the following result is a straightforward generalization of \cite[Theorem 4.1]{KilianovaSevcovicANZIAM} for a more general drift and volatility functions depending on $x$ and $t$ and therefore it is omitted.

\begin{theorem}\label{smootheness}
Assume that the functions $(x,t,\bmtheta)\mapsto\mu(x,t,\bmtheta)$ and $(x,t,\bmtheta )\mapsto\sigma(x,t,\bmtheta)^2$ are $C^{1,1}$ smooth in $x, t$ and $\bmtheta$ variables, and such that the objective function 
$f(x,t,\varphi, \bmtheta):= - \mu(x,t,\bmtheta) + \frac{\varphi}{2} \sigma(x,t,\bmtheta)^2$
is strictly convex in the variable $\bmtheta\in\Delta$ for any  $\varphi\in(\varphi_{min}, \infty)$ where $\Delta\subset\mathbb{R}^n$ is a compact convex set. Then the value function $\alpha$ is $C^{1,1}$ smooth for $x\in\R, \tau\in[0,T], \varphi\in(\varphi_{min},\infty)$. Moreover, $\varphi\mapsto\alpha(\cdot,\varphi)$ is a strictly increasing function. For the derivative of $\alpha$ we have $\alpha^\prime_\varphi(x,\tau,\varphi) = (1/2)\sigma(x,T-\tau,\hat\bmtheta(x,\tau, \varphi))^2$ where $\hat\bmtheta(x,\tau, \varphi)\in\Delta$ is the argument of the minimum of $\alpha(x,\tau, \varphi)$ with respect to $\bmtheta$.
\end{theorem}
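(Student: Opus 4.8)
The plan is to regard $\alpha$ as the value function of a parametric convex program with parameter $\xi=(x,\tau,\varphi)$ and to deduce all three assertions from two ingredients: (i) uniqueness and Lipschitz dependence on $\xi$ of the minimizer $\hat\bmtheta$, and (ii) an envelope (Danskin-type) identity for the value function. The regularity of $\alpha$ will then follow by composing the $C^{1,1}$ data $\mu,\sigma^2$ with the Lipschitz map $\hat\bmtheta$, and the monotonicity will drop out of the sign of the derivative.

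First I would fix $\xi$ with $\varphi\in(\varphi_{min},\infty)$ and observe that $\bmtheta\mapsto f(x,T-\tau,\varphi,\bmtheta)$ is continuous and, by hypothesis, strictly convex on the compact convex set $\Delta$. Hence the minimum in \eqref{eq_alpha_def} is attained at a unique point $\hat\bmtheta(x,\tau,\varphi)\in\Delta$, which defines the minimizer map appearing in the statement.

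The technical core is to upgrade existence of $\hat\bmtheta$ to \emph{Lipschitz} dependence on $\xi$, and I expect this to be the main obstacle. The route I would take is the variational-inequality characterization of a constrained convex minimizer, $\langle\nabla_\bmtheta f(\xi,\hat\bmtheta(\xi)),\,\bmtheta-\hat\bmtheta(\xi)\rangle\ge 0$ for all $\bmtheta\in\Delta$. Testing this at two parameter values $\xi_1,\xi_2$ against each other's minimizers and adding the two inequalities yields a monotonicity relation; splitting the gradient difference into a $\bmtheta$-increment and a $\xi$-increment, then bounding the former below by uniform convexity of $f$ in $\bmtheta$ (modulus $\lambda>0$) and the latter above by Lipschitz continuity of $\nabla_\bmtheta f$ in $\xi$ (furnished by the $C^{1,1}$ hypothesis), I arrive at an estimate of the form $\lambda\,\|\hat\bmtheta(\xi_2)-\hat\bmtheta(\xi_1)\|\le L\,\|\xi_2-\xi_1\|$. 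The delicate point is that mere strict convexity does not by itself control the argmin in a Lipschitz way — the one-dimensional map $\theta\mapsto\theta^4$ shows that strict convexity plus $C^{1,1}$ can force only H\"older dependence of the minimizer. What is genuinely needed is the uniform convexity that the structure $f=-\mu+\tfrac{\varphi}{2}\sigma^2$ supplies in the examples: when $\sigma^2=\bmtheta^\transp\bmSigma\,\bmtheta$ with $\bmSigma\succ 0$, the Hessian of the quadratic term is bounded below by $\varphi\,\lambda_{\min}(\bmSigma)>0$ on the relevant compact $\varphi$-range. I would therefore make this uniform convexity explicit before running the variational-inequality estimate.

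With Lipschitz continuity of $\hat\bmtheta$ in hand, the rest follows quickly. Applying the envelope theorem (Danskin) in each of the variables $x,\tau,\varphi$ — legitimate since the minimizer is unique and $f$ is $C^1$ in the parameters — gives differentiability of $\alpha$ with $\alpha'_\varphi=\tfrac12\sigma(x,T-\tau,\hat\bmtheta)^2$, $\alpha'_x=\partial_x f(x,T-\tau,\varphi,\hat\bmtheta)$ and $\alpha'_\tau=-\partial_t f(x,T-\tau,\varphi,\hat\bmtheta)$. Each of these is a Lipschitz (that is, $C^{0,1}$) function of $(x,t,\varphi)$ composed with the Lipschitz map $\hat\bmtheta$, hence itself Lipschitz, which yields the asserted $C^{1,1}$ regularity of $\alpha$. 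Finally, since $\alpha'_\varphi=\tfrac12\sigma(\cdot,\hat\bmtheta)^2\ge 0$, the map $\varphi\mapsto\alpha(\cdot,\varphi)$ is non-decreasing, and it is strictly increasing because $\sigma(\cdot,\hat\bmtheta)^2>0$: the minimizer lies in $\Delta\subset\mathcal{S}^n$ and so never vanishes, whence the positive-definite volatility form remains strictly positive there.
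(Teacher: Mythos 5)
Your proposal follows essentially the same route as the paper's own (omitted) proof, which defers to Theorem 4.1 of \cite{KilianovaSevcovicANZIAM}: uniqueness of the minimizer from strict convexity and compactness, Lipschitz dependence of $\hat\bmtheta$ on the parameters, the Danskin envelope identity giving $\alpha^\prime_\varphi=\frac{1}{2}\sigma(x,T-\tau,\hat\bmtheta)^2$, and the sign of this derivative for monotonicity; the only real difference is that you prove the Lipschitz step by hand via the variational-inequality/strong-monotonicity estimate, where the paper's antecedent leans on parametric-programming theory (cf. \cite{Bank1983}). Your two caveats are well taken and worth recording, because they show the hypotheses of Theorem~\ref{smootheness} are stated too loosely. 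First, strict convexity alone is insufficient for the $C^{1,1}$ conclusion itself, not merely for your method of proof: take $n=1$, $\Delta=[-1,1]$ (or embed this in the simplex by an affine substitution), $\sigma^2\equiv 1$ and $\mu(x,t,\theta)=-\theta^4+x\theta$; then $f=\theta^4-x\theta+\varphi/2$ is $C^{1,1}$ and strictly convex for every $\varphi$, yet $\hat\theta(x)=(x/4)^{1/3}$ and, by the envelope identity, $\partial_x\alpha(x,\varphi)=-(x/4)^{1/3}$, which is only $1/3$-H\"older at $x=0$, so $\alpha\notin C^{1,1}$ while all stated hypotheses hold. The conclusion is rescued precisely by the locally uniform (strong) convexity you propose to make explicit, which the intended application supplies through $\sigma^2=\bmtheta^\transp\bmSigma\bmtheta$ with $\bmSigma\succ0$, giving modulus $\varphi\,\lambda_{\min}(\bmSigma)$, locally uniform on compact subsets of $(\varphi_{min},\infty)$. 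Second, strictness (rather than mere monotonicity) of $\varphi\mapsto\alpha(\cdot,\varphi)$ likewise does not follow from the stated assumptions but from $\sigma(x,T-\tau,\hat\bmtheta)^2>0$, which you correctly extract from the portfolio structure ($\hat\bmtheta\in\mathcal{S}^n$, $\bmSigma\succ0$) rather than from the theorem's hypotheses. With those two structural facts added as explicit assumptions, your variational-inequality estimate and envelope argument are complete and correct, and they match the paper's approach.
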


\begin{example}\label{examplealpha}
If we consider an example of the decision set $\Delta=\{\bmtheta\in\mathbb{R}^2,\,  \theta_1,\theta_2\ge0, \theta_1+\theta_2=1\}, n=2, \mu=\bmmu^T\bmtheta, \sigma^2=\bmtheta^T\bmSigma\bmtheta$, then the value function $\alpha$ has the form:
\[
\alpha(\varphi)=\left\{ 
\begin{array}{ll}
A \varphi - \frac{B}{\varphi} + C,     & \hbox{if} \ \varphi>\varphi_*, \\
E \varphi + D,     & \hbox{if}\ \varphi\le \varphi_*,
\end{array}
\right.
\]
where constants $A>0, B>0, C, D, E>0, \varphi_*>0$  depend on the mean return vector $\bmmu$ and covariance matrix $\bmSigma$ and are such that $\alpha$ is $C^{1,1}$ continuous function having one point of discontinuity of the second derivative $\alpha''$ at $\varphi_*$. The minimizer $\hat\bmtheta=\hat\bmtheta(\varphi)$ increases the number of positive weights when $\varphi$ passes through $\varphi_*$. For $n>2$, the number of discontinuities of $\alpha_\varphi^{\prime\prime}$ increases (cf. \cite{KilianovaSevcovicANZIAM}).
\end{example}

In what follows, we shall denote by $\partial_x\alpha$ the total differential of the function $\alpha(x,\tau,\varphi)$ where $\varphi=\varphi(x,\tau)$, that is
\[
\partial_x\alpha (x,\tau,\varphi) = \alpha^\prime_x(x,\tau,\varphi) + \alpha^\prime_\varphi(x,\tau,\varphi)\, \partial_x \varphi,
\]
where $\alpha^\prime_x$ and $\alpha^\prime_\varphi$ are partial derivatives of $\alpha$ with respect to variables $x$ and $\varphi$, respectively.

The relationship between the transformed function $\varphi$ and the value function $V$ is given by the following theorem. 

\begin{theorem}\label{th-equiv-cons}
Assume that the utility function $u(x)$ and the intertemporal utility function $c(x,t)$ are $C^2$ smooth functions and such that $u$ is increasing and $c$ is non-decreasing in the $x$ variable. Then an increasing function $V(x,t)$ in the $x$ variable is a solution to the Hamilton-Jacobi-Bellman equation (\ref{eq_HJB}) if and only if the transformed function $\varphi(x,\tau) = -\partial_x^2 V(x,t)/\partial_x V(x,t),\,\ t=T-\tau$,  is a solution to the quasi-linear parabolic non-local PDE:
\begin{eqnarray}
&&-\partial_\tau \varphi + \partial_x\left(\partial_x\alpha(\cdot,\varphi) - \alpha(\cdot,\varphi)\varphi\right) 
=\frac{1}{b(T-\tau)} \partial_x\left( e^{\int_{x_*}^x\varphi(\xi,\tau)d\xi} \partial_x c\right),
\label{eq_PDEphi-cons}
\\
&&\varphi(x,0) = -u''(x)/u'(x),\quad (x,\tau)\in\R\times(0,T), 
 \label{init_PDEphi-cons}
\end{eqnarray}
and
\begin{equation}
V(x,t) = a(t) + b(t) \int_{x_*}^x e^{-\int_{x_*}^\xi \varphi(\eta,\tau) d\eta} d\xi, \quad t=T-\tau,
\label{Vexpression-cons}
\end{equation}
where the functions $a(t)$ and $b(t)$ are solutions to the system of ODEs:
\begin{eqnarray}
\frac{d}{dt} a(t) &=& \gamma(t) b(t) -c(x_*,t), \quad a(T)=u(x_*),
\label{ODEsystem-cons-a}
\\
\frac{d}{dt} b(t) &=& \omega(t) b(t) -\partial_x c(x_*,t), \quad b(T)=u^\prime(x_*).
\label{ODEsystem-cons-b}
\end{eqnarray}
Here $x_*\in\R$ is a fixed real number, $\gamma(t) := \alpha( x_*, \tau, \varphi(x_*,\tau))$, and $\omega(t) := \partial_x\alpha( x_*, \tau, \varphi(x_*,\tau)) - \alpha( x_*, \tau, \varphi(x_*,\tau)) \varphi(x_*,\tau)$ where $\tau=T-t$.
\end{theorem}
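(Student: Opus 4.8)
The plan is to work throughout with the equivalent first-order form \eqref{eq_HJBtransf} of the HJB equation, since the discussion preceding the theorem already establishes that, for an increasing $V$, equations \eqref{eq_HJB} and \eqref{eq_HJBtransf} are equivalent with $\alpha$ the parametric value function of Theorem \ref{smootheness}. Writing $\psi := \partial_x V$ and $\tau = T-t$, the defining relation \eqref{eq_varphi} reads $\partial_x \ln \psi = -\varphi$, so integrating from a fixed base point $x_*$ gives $\psi(x,\tau) = \psi(x_*,\tau)\, e^{-\int_{x_*}^x \varphi(\eta,\tau)\,d\eta}$ and, integrating once more, the representation \eqref{Vexpression-cons} with the identifications $a(t) = V(x_*,t)$ and $b(t) = \partial_x V(x_*,t) = \psi(x_*,\tau)$. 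In particular $b(t)>0$ encodes precisely that $V$ is increasing, and at $\tau=0$ the terminal data $V(\cdot,T)=u$ yield $a(T)=u(x_*)$, $b(T)=u'(x_*)$ and $\varphi(\cdot,0)=-u''/u'$, i.e.\ \eqref{init_PDEphi-cons}.

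For the forward implication I would differentiate \eqref{eq_HJBtransf}, rewritten as $\partial_\tau V = -\alpha\,\partial_x V + c$, with respect to $x$. Using $\partial_x^2 V = -\varphi\,\psi$ this gives $\partial_\tau \psi = (-\partial_x\alpha + \alpha\varphi)\psi + \partial_x c$, where $\partial_x\alpha$ is the total $x$-derivative introduced before the theorem. Dividing by $\psi$ and differentiating once more in $x$, and using $\partial_x(\partial_\tau\psi/\psi) = \partial_\tau(\partial_x\ln\psi) = -\partial_\tau\varphi$, produces
\[
-\partial_\tau\varphi + \partial_x\!\left(\partial_x\alpha - \alpha\varphi\right) = \partial_x\!\left(\frac{\partial_x c}{\psi}\right).
\]
The non-local term of \eqref{eq_PDEphi-cons} is now exactly the reservoir for the factor $1/\psi$: substituting $1/\psi = e^{\int_{x_*}^x \varphi\,d\eta}/b(T-\tau)$ turns the right-hand side into $b(T-\tau)^{-1}\partial_x(e^{\int_{x_*}^x\varphi}\,\partial_x c)$, which is precisely \eqref{eq_PDEphi-cons}. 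Evaluating the transformed HJB and its first $x$-derivative at $x=x_*$, where the exponential factors equal $1$, then yields the ODEs \eqref{ODEsystem-cons-a}--\eqref{ODEsystem-cons-b} with the stated $\gamma,\omega$ and terminal values.

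For the converse I would take $\varphi$ solving \eqref{eq_PDEphi-cons}--\eqref{init_PDEphi-cons} and $a,b$ solving \eqref{ODEsystem-cons-a}--\eqref{ODEsystem-cons-b}, define $V$ by \eqref{Vexpression-cons}, and verify that it solves \eqref{eq_HJBtransf}. A direct computation gives $\partial_x V = b\,e^{-\int_{x_*}^x\varphi}>0$, so $V$ is increasing, and $\partial_x^2 V = -\varphi\,\partial_x V$, so the Riccati transform of this $V$ is indeed the given $\varphi$. Setting $R := \partial_t V - \alpha\,\partial_x V + c$, the $a$-equation \eqref{ODEsystem-cons-a} gives $R(x_*,t)=0$, while running the forward computation backwards shows $\partial_x\!\big(\partial_\tau\psi/\psi + \partial_x\alpha - \alpha\varphi - \partial_x c/\psi\big)=0$ by \eqref{eq_PDEphi-cons}; the enclosed expression is a function of $\tau$ alone, and evaluating it at $x_*$ and invoking the $b$-equation \eqref{ODEsystem-cons-b} shows it vanishes. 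Hence $\partial_x R\equiv 0$, so $R(x,t)=R(x_*,t)=0$; the terminal condition $V(\cdot,T)=u$ follows from $a(T),b(T)$ and $\varphi(\cdot,0)$ as above, and equivalence with the original \eqref{eq_HJB} is then immediate from Theorem \ref{smootheness}.

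The main obstacle I anticipate is twofold. First, the careful bookkeeping of the non-local term: recognizing that the factor $1/\psi = 1/\partial_x V$ cannot be expressed locally in $\varphi$ but must be reconstructed as $e^{\int_{x_*}^x\varphi}/b$ is what forces the integro-differential structure and couples the PDE for $\varphi$ to the ODE for $b$. Second, the two $x$-independent integration constants created by the two spatial differentiations must be pinned down, and this is exactly the role of the ODE system for $a$ and $b$ evaluated at $x_*$; making this rigorous relies on the divergence structure of \eqref{eq_PDEphi-cons}, so that only the first derivatives $\alpha'_x,\alpha'_\varphi$, which are Lipschitz by Theorem \ref{smootheness}, enter, rather than the possibly discontinuous $\alpha''$ of Example \ref{examplealpha}.
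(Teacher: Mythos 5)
Your overall strategy is essentially the paper's: in the forward direction you differentiate the transformed equation (\ref{eq_HJBtransf}) twice in $x$ and use the Riccati relation (your bookkeeping via $\partial_x\ln\partial_x V=-\varphi$ is a tidier way of organizing the paper's expansion of $\partial_x^3 V$), and you obtain the ODEs (\ref{ODEsystem-cons-a})--(\ref{ODEsystem-cons-b}) by evaluating the transformed equation and its $x$-derivative at $x_*$, exactly as the paper does. In the converse direction, your observation that (\ref{eq_PDEphi-cons}) forces $\partial_x R\equiv 0$ for the residual $R=\partial_t V-\alpha\,\partial_x V+c$ (via the $x$-independence of $G:=\partial_\tau\psi/\psi+\partial_x\alpha-\alpha\varphi-\partial_x c/\psi$, pinned down at $x_*$ by the $b$-equation) is a legitimate and slightly cleaner variant of the paper's direct computation of $\partial_t V$ from (\ref{Vexpression-cons}).

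There is, however, one genuine gap in your converse direction: you assert that ``a direct computation gives $\partial_x V=b\,e^{-\int_{x_*}^x\varphi}>0$,'' but the positivity is not a computation --- it requires proving that the solution $b(t)$ of the ODE (\ref{ODEsystem-cons-b}) stays positive on $[0,T]$. This is not automatic, since $b$ solves a linear ODE with the source term $-\partial_x c(x_*,t)$; one needs the hypotheses that $c$ is non-decreasing in $x$ (so $\partial_x c(x_*,t)\ge 0$) and $u'(x_*)>0$. The paper settles this with an integrating factor: $\frac{d}{dt}\bigl(b(t)e^{-\int_t^T\omega(\eta)d\eta}\bigr)=-\partial_x c(x_*,t)e^{-\int_t^T\omega(\eta)d\eta}\le 0$, which, integrated over $(t,T)$, yields $b(t)\ge u'(x_*)e^{\int_t^T\omega(\eta)d\eta}>0$. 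Your converse uses $b>0$ in at least three essential places --- to conclude that $V$ is increasing, to divide by $\psi=\partial_x V$ when forming $G$, and to pass from (\ref{eq_HJBtransf}) back to the original HJB equation (\ref{eq_HJB}), which requires factoring the positive quantity $\partial_x V$ out of the maximization --- so without this step the converse does not close. It is telling that your write-up never invokes the hypothesis that $c$ is non-decreasing in $x$: that hypothesis is in the statement precisely to guarantee $b(t)>0$.
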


\noindent P r o o f. Let $V$ be a solution to the HJB equation (\ref{eq_HJB}) satisfying the terminal condition $V(x,T)=u(x)$ and such that $\partial_x V(x,t)>0$ for each $(x,t)\in\R\times[0,T)$. Thus $V$ solves (\ref{eq_HJBtransf}), i.e. $\partial_t V = \alpha(x,\tau,\varphi)\, \partial_x V - c$ where $\varphi=-\partial^2_x V/\partial_x V$. Therefore, $V$ is given by (\ref{Vexpression-cons}) with $a(t)=V(x_*,t)$ and $b(t)=\partial_x V(x_*,t)$. 

Since
\[
-\partial_\tau\varphi = -\frac{\partial^2_x\partial_t V}{\partial_x V} + \frac{\partial^2_x V \partial_x\partial_t V}{(\partial_x V)^2}
= -\frac{\partial^2_x\partial_t V}{\partial_x V} - \varphi\frac{\partial_x\partial_t V}{\partial_x V}, 
\]
\[
\partial^2_x V = -\varphi \partial_x V,\quad\hbox{and}\quad 
\partial^3_x V = - \partial_x(\varphi \partial_x V) = (\varphi^2 - \partial_x\varphi) \partial_x V,
\]
it follows from the equation $\partial_t V - \alpha(\cdot,\varphi) \partial_x V  + c =0$ that $\varphi$ satisfies:
\begin{eqnarray*}
-\partial_\tau\varphi &=& -\frac{1}{\partial_x V} \biggl( 
\partial^2_x\alpha\, \partial_x V + 2 \partial_x\alpha \, \partial^2_x V +\alpha\,\partial^3_x V + \varphi\partial_x\alpha\,\partial_x V +\varphi\alpha\,\partial^2_x V
 - \partial^2_x c -\varphi\partial_x c\biggr)
\\
&=& -\frac{1}{\partial_x V} \biggl( 
\partial^2_x\alpha\, \partial_x V -\varphi \partial_x\alpha\,\partial_x V +\alpha(\varphi^2-\partial_x\varphi)\partial_x V - \varphi^2\alpha\,\partial_x V
 - \partial^2_x c -\varphi\partial_x c
\biggr)
\\
&=& - \partial_x\left(\partial_x\alpha - \alpha\varphi\right)
+\frac{1}{\partial_x V(x,t)} 
\biggl( \varphi \partial_x c +\partial^2_x c\biggr)
\\
&=& - \partial_x\left(\partial_x\alpha - \alpha\varphi\right)
+\frac{e^{\int_{x_*}^x \varphi(\eta,t)d\eta}}{b(t)} 
\biggl( \varphi \partial_x c +\partial^2_x c\biggr)
\\
&=& - \partial_x\left(\partial_x\alpha - \alpha\varphi\right)
+ \frac{1}{b(t)} \partial_x \biggl(e^{\int_{x_*}^x \varphi(\eta,\tau)d\eta} \partial_x c\biggr), \qquad t=T-\tau.
\end{eqnarray*}
It means that the function $\varphi$ is a solution to the Cauchy problem 
(\ref{eq_PDEphi-cons})--(\ref{init_PDEphi-cons}). By differentiating (\ref{eq_HJBtransf}) with respect to $x$ we obtain $\partial_t \partial_x V = \partial_x(\alpha \partial_x V) -\partial_x c = \partial_x\alpha\, \partial_x V + \alpha \partial^2_x V -\partial_x c =  (\partial_x\alpha -  \alpha\,\varphi) \partial_x V -\partial_x c$. Taking $x=x_*$ we conclude 
$\partial_t \partial_x V(x_*,t) = \omega(t) \partial_x V(x_*,t) -\partial_x c(x_*,t)$. As $\partial_x V(x_*,T) =U^\prime(x_*)$ we obtain $b(t)=\partial_x V(x_*,t)$ is the solution to the ODE (\ref{ODEsystem-cons-b}). Furthermore, as $\partial_t V(x_*,t) = \alpha(x_*,\tau,\varphi(x_*,\tau)) \partial_x V(x_*,t) - c(x_*,t) = \gamma(t) b(t) - c(x_*,t)$, $\tau=T-t$, and $V(x_*,T)= u(x_*)$ we conclude  $a(t)=V(x_*,t)$ solves (\ref{ODEsystem-cons-a}), as claimed. 

On the other hand, suppose that a function $\varphi$ solves (\ref{eq_PDEphi-cons})--(\ref{init_PDEphi-cons}) and functions $a,b$ solve (\ref{ODEsystem-cons-a})--(\ref{ODEsystem-cons-b}). Then $V(x,t)$ given by (\ref{Vexpression-cons}) satisfies $-\partial^2_x V(x,t)/\partial_x V(x,t) = \varphi(x,\tau)$, $\tau=T-t$,  and \begin{eqnarray*}
V(x,T) &=& a(T) + b(T) \int_{x_*}^x e^{-\int_{x_*}^\xi \varphi(\eta,0) d\eta} d\xi 
\\
&=& u(x_*) + u'(x_*) \int_{x_*}^x e^{\int_{x_*}^\xi u''(\eta)/u'(\eta) d\eta} d\xi = u(x). 
\end{eqnarray*}

The function $b(t)$ satisfying (\ref{ODEsystem-cons-b}) is a positive function. Indeed, as $c$ is non-decreasing at $x_*$ we have
\[
\frac{d}{dt} \left(b(t) e^{-\int_t^T\omega(\eta)d\eta}\right) = -\partial_x c(x_*,t) e^{-\int_t^T\omega(\eta)d\eta} \le 0.
\]
Integrating the above inequality over $(t,T)$ we obtain 
$b(T) - b(t) e^{-\int_t^T\omega(\eta)d\eta} \le 0$ and so
\[
b(t)\ge b(T) e^{\int_t^T\omega(\eta)d\eta} = u'(x_*) e^{\int_t^T\omega(\eta)d\eta} >0,
\]
for any $t\in[0,T]$. Furthermore, as $\partial_x V(x,t) = b(t) e^{-\int_{x_*}^x \varphi(\xi,\tau) d\xi} >0$, the function $V(x,t)$ is increasing in the $x$ variable.

Note that for any $\xi$ we have
\begin{eqnarray*}
\int_{x_*}^x \partial_\xi V  \left(\partial_\xi\alpha -\alpha\,\varphi \right)d\xi
&=& \alpha\partial_x V - \gamma(t) b(t)  + \int_{x_*}^x -\partial^2_\xi V \, \alpha - \partial_\xi V\, \alpha \varphi d\xi
\\
&=&  \alpha\partial_x V - \gamma(t) b(t).
\end{eqnarray*}
Moreover, as $\varphi$ solves (\ref{eq_PDEphi-cons}), we have
\[
-\int_{x_*}^\xi \partial_\tau\varphi(\eta,\tau) d\eta = - \partial_\xi\alpha + \alpha\varphi +  \omega(t)
+ \frac{1}{b(t)}\left( e^{\int_{x_*}^\xi \varphi(\eta,\tau) d\eta} \partial_\xi c(\xi,t) - \partial_x c(x_*,t)  \right), \ \ t=T-\tau. 
\]
Differentiating (\ref{Vexpression-cons}) with respect to $t$ we obtain 
\begin{eqnarray*}
\partial_t V(x,t) &=& \frac{d a}{dt} +  \int_{x_*}^x e^{-\int_{x_*}^\xi \varphi(\eta,\tau) d\eta} \left(\frac{db}{dt} + b \int_{x_*}^\xi \partial_\tau\varphi(\eta,\tau) d\eta \right) d\xi 
\\
&=& \frac{d a}{dt} + \int_{x_*}^x e^{-\int_{x_*}^\xi \varphi(\eta,\tau) d\eta}\biggl(
\frac{db}{dt} + b ( \partial_\xi\alpha - \alpha\varphi) - b \omega 
\\
&& 
-e^{\int_{x_*}^\xi \varphi(\eta,\tau) d\eta} \partial_\xi c(\xi,t) + \partial_x c(x_*,t)
\biggr)d\xi
\\
&=&
 \frac{d a}{dt} + \int_{x_*}^x \partial_\xi V  \left(\partial_\xi\alpha -\alpha\,\varphi \right)d\xi - c(x,t) + c(x_*,t) 
 \\
 &=& \alpha(x,\tau,\varphi(x,\tau)) \partial_x V(x,t)  - c(x,t), \quad\tau=T-t, 
\end{eqnarray*}
which means that $V(x,t)$ solves equation (\ref{eq_HJBtransf}). Since $\partial_x V>0$, the function $V$ solves the HJB equation (\ref{eq_HJB}), as claimed.  
\hfill$\diamondsuit$

\medskip
Notice that the system of parabolic-ordinary differential equations (\ref{eq_PDEphi-cons})--(\ref{ODEsystem-cons-b}) can also be rewritten as a system of two quasi-linear parabolic equations. Indeed, let us denote 
\begin{equation}
\psi(x,\tau)=\frac{1}{b(t)}  e^{\int_{x_*}^x\varphi(\xi,\tau)d\xi}, \quad t=T-\tau.
\label{psi-fun}
\end{equation}
Then, by (\ref{Vexpression-cons}) we have $\psi(x,\tau)=1/\partial_x V(x,t)$. With regard to (\ref{eq_HJBtransf}) we obtain
\begin{eqnarray*}
- \partial_\tau \psi 
&=& -\frac{1}{(\partial_x V)^2} \partial_x\partial_t V
= -\frac{1}{(\partial_x V)^2}
 \left(
\partial_x\alpha\,\partial_x V + \alpha\, \partial_x^2 V -\partial_x c
\right)
\\
&=& -\partial_x\alpha\, \psi +\alpha\,\varphi\,\psi + \psi^2 \partial_x c. 
\end{eqnarray*}
As $\partial_x\psi=\varphi\,\psi$, we obtain  $\partial_x\varphi\, \psi = \partial_x^2\psi - \varphi\,\partial_x\psi$. Since $\partial_x\alpha = \alpha^\prime_\varphi\, \partial_x\varphi + \alpha^\prime_x$, we conclude that the function $\psi$ satisfies the following parabolic equation:
\[
-\partial_\tau\psi + \alpha^\prime_\varphi \partial^2_x\psi - (\alpha^\prime_\varphi \varphi +\alpha) \partial_x \psi + \alpha^\prime_x \psi - \psi^2 \partial_x c.
\]
The terminal condition $\psi(x,T)$ can be deduced from the terminal condition $V(x,T)=u(x)$. That is, $\psi$ satisfies the initial condition $\psi(x,0)=1/u'(x)$. In summary, we have shown the following theorem:

\begin{theorem}\label{th-equiv-phi-psi}
Assume that the terminal utility function $u(x)$ and the intertemporal function $c(x,t)$ are $C^2$ smooth functions and such that $u$ is increasing and $c$ is non-decreasing in the $x$ variable. Then an increasing function $V(x,t)$ in the $x$ variable is a solution to the Hamilton-Jacobi-Bellman equation (\ref{eq_HJB}) if and only if the pair $(\varphi,\psi)$ of transformed functions $\varphi(x,\tau) = -\partial_x^2 V(x,t)/\partial_x V(x,t)$ and $\psi(x,\tau)=1/\partial_x V(x,t),  t=T-\tau$, is a solution to the system of quasi-linear parabolic PDEs:
\begin{eqnarray}
&&-\partial_\tau \varphi + \partial_x\left(\partial_x\alpha(\cdot,\varphi) - \alpha(\cdot,\varphi)\varphi\right) 
=\partial_x\left(\psi \partial_x c\right),
\label{eq_PDEphi-cons-phi}
\\
&&-\partial_\tau\psi + \alpha^\prime_\varphi \partial^2_x\psi - (\alpha^\prime_\varphi \varphi +\alpha) \partial_x \psi + \alpha^\prime_x \psi - \psi^2 \partial_x c =0,
\label{eq_PDEphi-cons-psi}
\\
&&\varphi(x,0) = -u''(x)/u'(x),\quad \psi(x,0)=1/u'(x), \quad (x,\tau)\in\R\times(0,T), 
\label{init_PDEphi-cons-phi-psi}
\end{eqnarray}
and the value function $V(x,t)$ is given by (\ref{Vexpression-cons}).
\end{theorem}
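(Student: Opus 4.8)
The plan is to deduce this theorem from Theorem~\ref{th-equiv-cons} rather than re-establishing the equivalence with the HJB equation from scratch, viewing the passage from the non-local scalar formulation to the coupled local system as a change of unknowns. The crucial observation is that both transformed functions are built from the same $V$, namely $\varphi=-\partial_x^2 V/\partial_x V$ and $\psi=1/\partial_x V$, so that $\psi=1/\partial_x V>0$ and
\[
\partial_x\psi=\partial_x\!\left(\frac{1}{\partial_x V}\right)=-\frac{\partial_x^2 V}{(\partial_x V)^2}=\varphi\,\psi
\]
hold automatically; comparing with (\ref{psi-fun}) and (\ref{Vexpression-cons}) then gives $\psi(x,\tau)=b(t)^{-1}e^{\int_{x_*}^x\varphi(\xi,\tau)\,d\xi}$ with $t=T-\tau$. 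This single identity is what links the two formulations, and I would isolate it at the outset.

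For the forward implication I would start from an increasing $V$ solving (\ref{eq_HJB}). Theorem~\ref{th-equiv-cons} immediately yields that $\varphi$ solves the non-local equation (\ref{eq_PDEphi-cons}) together with the ODE system (\ref{ODEsystem-cons-a})--(\ref{ODEsystem-cons-b}). Substituting $b(T-\tau)^{-1}e^{\int_{x_*}^x\varphi\,d\xi}=\psi$ into the right-hand side of (\ref{eq_PDEphi-cons}) converts the non-local term $b^{-1}\partial_x(e^{\int\varphi}\partial_x c)$ into the local term $\partial_x(\psi\,\partial_x c)$, producing (\ref{eq_PDEphi-cons-phi}). The equation (\ref{eq_PDEphi-cons-psi}) for $\psi$ is obtained exactly as in the computation preceding the statement: differentiate the transformed HJB equation (\ref{eq_HJBtransf}) in $x$, use $\psi=1/\partial_x V$ to write $-\partial_\tau\psi=-(\partial_x V)^{-2}\partial_x\partial_t V$, simplify with $\psi\,\partial_x V=1$, and then eliminate the derivatives of $V$ via $\partial_x\psi=\varphi\psi$, $\partial_x\varphi\,\psi=\partial_x^2\psi-\varphi\,\partial_x\psi$ and the decomposition $\partial_x\alpha=\alpha^\prime_\varphi\,\partial_x\varphi+\alpha^\prime_x$. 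The initial conditions (\ref{init_PDEphi-cons-phi-psi}) follow by evaluating $\varphi$ and $\psi$ at $\tau=0$ using the terminal data $V(\cdot,T)=u$.

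For the reverse implication I would run the same equalities backwards. Given that $(\varphi,\psi)$ solve (\ref{eq_PDEphi-cons-phi})--(\ref{init_PDEphi-cons-phi-psi}), set $b(t):=1/\psi(x_*,T-t)$, let $a(t)$ solve (\ref{ODEsystem-cons-a}), and define $V$ by (\ref{Vexpression-cons}); by construction $1/\partial_x V=\psi$ and, using $\partial_x\psi=\varphi\psi$, also $-\partial_x^2 V/\partial_x V=\varphi$. Replacing $\psi$ by $b^{-1}e^{\int_{x_*}^x\varphi\,d\xi}$ in (\ref{eq_PDEphi-cons-phi}) recovers the non-local equation (\ref{eq_PDEphi-cons}), so to invoke the converse half of Theorem~\ref{th-equiv-cons} it only remains to verify that this chosen $b(t)$ actually solves the ODE (\ref{ODEsystem-cons-b}). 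I expect this compatibility check to be the main obstacle and the true content of the theorem: one must evaluate (\ref{eq_PDEphi-cons-psi}) at $x=x_*$, insert $\psi(x_*,\tau)=1/b$, $\partial_x\psi(x_*,\tau)=\varphi(x_*,\tau)/b$ and $\partial_x^2\psi(x_*,\tau)=(\partial_x\varphi(x_*,\tau)+\varphi(x_*,\tau)^2)/b$, and check that the cross terms collapse to $\omega(t)/b$ with $\omega=\partial_x\alpha-\alpha\varphi$ evaluated at $x_*$, while $-\partial_\tau\psi(x_*,\tau)=-b^{-2}\,db/dt$; multiplying through by $-b^2$ should then reproduce exactly $db/dt=\omega(t)b-\partial_x c(x_*,t)$. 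Once this is confirmed, Theorem~\ref{th-equiv-cons} gives that $V$ solves the HJB equation, and since $\partial_x V=1/\psi>0$ it is increasing, completing the equivalence.
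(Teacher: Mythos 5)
Your forward implication is fine and coincides with the paper's own argument: Theorem~\ref{th-equiv-cons} gives the non-local equation (\ref{eq_PDEphi-cons}), the substitution $\psi=b^{-1}e^{\int_{x_*}^x\varphi\,d\xi}$ localizes its right-hand side, and (\ref{eq_PDEphi-cons-psi}) follows from the computation displayed before the theorem. The genuine gap is in your reverse implication. You start from an \emph{arbitrary} solution pair $(\varphi,\psi)$ of (\ref{eq_PDEphi-cons-phi})--(\ref{init_PDEphi-cons-phi-psi}), set $b(t)=1/\psi(x_*,T-t)$, define $V$ by (\ref{Vexpression-cons}), and claim that ``by construction $1/\partial_x V=\psi$''. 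This does not hold by construction: from (\ref{Vexpression-cons}) one only gets $1/\partial_x V(x,t)=\psi(x_*,\tau)\,e^{\int_{x_*}^x\varphi(\eta,\tau)\,d\eta}$, and this coincides with $\psi(x,\tau)$ if and only if $\partial_x\psi=\varphi\,\psi$ holds for all $x$. That identity is automatic in the forward direction, where $\varphi$ and $\psi$ are both built from one and the same $V$ (this is exactly what your opening paragraph establishes), but for a bare solution of the coupled system it is an extra compatibility constraint which neither (\ref{eq_PDEphi-cons-phi}) nor (\ref{eq_PDEphi-cons-psi}) contains. Your verification of the ODE (\ref{ODEsystem-cons-b}) suffers from the same circularity: the values $\partial_x\psi(x_*,\tau)=\varphi(x_*,\tau)/b$ and $\partial_x^2\psi(x_*,\tau)=(\partial_x\varphi(x_*,\tau)+\varphi(x_*,\tau)^2)/b$ that you insert into (\ref{eq_PDEphi-cons-psi}) at $x=x_*$ are consequences of the unproven identity $\partial_x\psi=\varphi\psi$, not of the system itself. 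So you have identified the wrong step as ``the main obstacle'': the $b$-compatibility check is routine once the constraint is known, and impossible before.

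There are two ways to close the gap. The cheap way is to use the theorem's actual hypothesis: in the reverse direction the pair $(\varphi,\psi)$ is assumed to consist of the \emph{transformed functions of $V$}, with $V$ given by (\ref{Vexpression-cons}) (hence $a,b$ solving (\ref{ODEsystem-cons-a})--(\ref{ODEsystem-cons-b})); then $\psi=1/\partial_x V=b^{-1}e^{\int_{x_*}^x\varphi\,d\xi}$ holds by hypothesis, (\ref{eq_PDEphi-cons-phi}) literally becomes (\ref{eq_PDEphi-cons}), and the converse half of Theorem~\ref{th-equiv-cons} finishes the proof --- this is the paper's (implicit) route. If you want your stronger statement, namely reconstruction of $V$ from an arbitrary solution pair, you must prove that the constraint propagates from $\tau=0$: setting $w=\partial_x\psi-\varphi\psi$ and combining the two equations with the identity $\partial_x\alpha=\alpha^\prime_\varphi\partial_x\varphi+\alpha^\prime_x$, all second-order terms in $\psi$ and the non-local-type terms cancel and one obtains the linear parabolic equation
\begin{equation*}
\partial_\tau w = \partial_x\left(\alpha^\prime_\varphi\,\partial_x w\right) - \left(\alpha^\prime_\varphi\varphi+\alpha\right)\partial_x w - \psi\,\partial_x c\; w ,
\end{equation*}
while (\ref{init_PDEphi-cons-phi-psi}) gives $w(\cdot,0)=\partial_x(1/u')+(u''/u')(1/u')=0$; uniqueness for this linear equation (under suitable boundary or decay conditions) then yields $w\equiv 0$. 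Only after this step do your remaining arguments --- identification of $\psi$ with $b^{-1}e^{\int_{x_*}^x\varphi\,d\xi}$, the evaluation at $x_*$ producing (\ref{ODEsystem-cons-b}), and the appeal to Theorem~\ref{th-equiv-cons} --- go through.
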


At this point, we would like to emphasize the advantage of the suggested approach. By defining $\alpha$ as in \eqref{eq_alpha_def} and subsequently setting up the PDEs in \eqref{eq_PDEphi-cons} or \eqref{eq_PDEphi-cons-phi}--\eqref{eq_PDEphi-cons-psi}, one can compute the function $\alpha$ beforehand and then plug it into the corresponding PDEs as a known function. In this way, we do not have to deal with the maximization operator from \eqref{eq_HJB} in each $x$ and $t$ separately which significantly simplifies the computation process. 

\medskip
Next we derive a-priori bounds on a solution $\varphi(x,\tau)$ of (\ref{eq_PDEphi-cons}). We will use parabolic comparison principle (cf. \cite{Protter}). To this end, we need to restrict the form of the value function $\alpha$ and utility functions $u,c$ by the following assumptions:

\medskip

%\noindent{\it Assumption (A)}
\begin{enumerate}
    \item[A1.] The value function $\alpha(x,\tau,\varphi)$ is separable in the following  sense:
    \[
    \alpha(x,\tau,\varphi)=\tilde\alpha(\varphi) + \alpha_0(x,\tau),
    \]
    where $\tilde\alpha$ is a $C^{1,1}$ smooth strictly increasing function with a bounded and Lipschitz continuous derivative for $\varphi\in(\varphi_{min},\infty)$ and $\alpha_0$ is a $C^2$ smooth function in $x\in\R$ and $\tau\in[0,T]$ variables.
    
    \item[A2.] There exist constants $\underline{\varphi}, \overline{\varphi}\in\R$ such that $\varphi_{min}\le\underline{\varphi}\le 0 \le \overline{\varphi}$, and the functions $\alpha_0$ and $c$ satisfy the estimates:
    \[
    \partial_x^2\alpha_0(x,\tau) -\overline{\varphi}\partial_x\alpha_0(x,\tau)\le 0 \ \le\  
        \partial_x^2\alpha_0(x,\tau) -\underline{\varphi}\partial_x\alpha_0(x,\tau),
    \]
    \[
    \overline{\varphi}\partial_x c(x,t) \ge - \partial_x^2 c(x,t) \ge \underline{\varphi}\partial_x c(x,t), \quad \partial_x c(x,t)\ge 0, \ t=T-\tau,
    \]
    for any $x\in\R$ and $\tau\in[0,T]$.
\end{enumerate}

\begin{example}
If $\alpha(x,\tau,\varphi) =\tilde\alpha(\varphi) - \varepsilon(\tau) e^{-x} -r(\tau)$ with $\varepsilon\ge0$ is the value function introduced in Section~2 then $\tilde\alpha$ is defined on $[-1,\infty)$ and $\alpha_0(x,\tau) =  - \varepsilon(\tau) e^{-x} -r(\tau)$ satisfies the assumption (A2) with $\underline{\varphi}=-1$ and any $\overline{\varphi}\ge 0$.

The intertemporal utility function $c$ of the form $c(x,t)=-\kappa e^{-d x -\varrho(T-t)}$ with $\kappa,d \ge 0$ satisfies (A2) if $\overline{\varphi}\ge d$.

\end{example}

\begin{theorem}\label{th-bounds}
Assume that the utility function $u(x)$ is a $C^2$ smooth strictly increasing function for $x\in\R$. Assume that the value function $\alpha$ and intertemporal utility function $c$ satisfy Assumption (A) with constants $\underline{\varphi}\le 0 \le \overline{\varphi}$.

If the utility function $u$ satisfies the inequalities
$\underline{\varphi} \le \varphi(x,0) = -u''(x)/u'(x) \le \overline{\varphi}$ for any $x\in\R$, then, for the bounded solution $\varphi$ to (\ref{eq_PDEphi-cons}) we have a-priori estimate: $\underline{\varphi} \le \varphi(x,\tau) \le \overline{\varphi}$ for any $\tau\in[0,T)$ and $x\in\R$.

\end{theorem}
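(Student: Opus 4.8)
The plan is to prove the estimate by the parabolic comparison principle, by verifying that the constant functions $\overline{\varphi}$ and $\underline{\varphi}$ are, respectively, a super- and a subsolution of the Cauchy problem (\ref{eq_PDEphi-cons})--(\ref{init_PDEphi-cons}). First I would invoke the separability assumption (A1) to write $\alpha(x,\tau,\varphi)=\tilde\alpha(\varphi)+\alpha_0(x,\tau)$, so that $\alpha^\prime_\varphi=\tilde\alpha^\prime(\varphi)>0$ is the strictly positive diffusion coefficient and $\alpha^\prime_x=\partial_x\alpha_0$. It is convenient to reintroduce the function $\psi$ from (\ref{psi-fun}): since $\psi=1/\partial_x V = b(T-\tau)^{-1}e^{\int_{x_*}^x\varphi}$ and $b>0$ by Theorem~\ref{th-equiv-cons}, we have $\psi>0$, and the non-local right-hand side of (\ref{eq_PDEphi-cons}) equals $\partial_x(\psi\,\partial_x c)=\psi(\varphi\,\partial_x c+\partial_x^2 c)$.

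A preliminary observation is that (A2) forces $\partial_x\alpha_0\ge 0$: subtracting the two inequalities in the first line of (A2) yields $(\overline{\varphi}-\underline{\varphi})\partial_x\alpha_0\ge 0$, and $\overline{\varphi}\ge\underline{\varphi}$ gives the claim. I would then substitute the constant $\overline{\varphi}$ into the operator on the left of (\ref{eq_PDEphi-cons}). As all $\varphi$-derivatives vanish, the operator collapses to $\partial_x^2\alpha_0-\overline{\varphi}\,\partial_x\alpha_0-\psi(\overline{\varphi}\,\partial_x c+\partial_x^2 c)$. The first group is $\le 0$ by the first inequality in (A2), while the second group is $\ge 0$ because $\psi>0$ and $\overline{\varphi}\,\partial_x c+\partial_x^2 c\ge 0$ by the second inequality in (A2); hence $\overline{\varphi}$ is a supersolution. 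A symmetric computation, using the reversed inequalities in (A2) (so that $\partial_x^2\alpha_0-\underline{\varphi}\,\partial_x\alpha_0\ge 0$ and $\underline{\varphi}\,\partial_x c+\partial_x^2 c\le 0$), shows that $\underline{\varphi}$ is a subsolution.

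To conclude I would rewrite (\ref{eq_PDEphi-cons}) in the non-divergence quasi-linear form
\[
\partial_\tau\varphi = \tilde\alpha^\prime(\varphi)\partial_x^2\varphi + \tilde\alpha^{\prime\prime}(\varphi)(\partial_x\varphi)^2 - \left(\tilde\alpha^\prime(\varphi)\varphi + \tilde\alpha(\varphi) + \alpha_0\right)\partial_x\varphi + \partial_x^2\alpha_0 - \varphi\,\partial_x\alpha_0 - \psi(\varphi\,\partial_x c + \partial_x^2 c),
\]
whose leading coefficient $\tilde\alpha^\prime(\varphi)$ is strictly positive, and apply the parabolic comparison principle (cf.~\cite{Protter}). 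Since the initial datum obeys $\underline{\varphi}\le\varphi(\cdot,0)\le\overline{\varphi}$ by hypothesis and the two constants are ordered sub-/supersolutions, the principle yields $\underline{\varphi}\le\varphi(x,\tau)\le\overline{\varphi}$ on $\R\times[0,T)$.

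The step I expect to be the main obstacle is the application of the comparison principle itself, for two related reasons. First, the spatial domain is the whole line $\R$ with no boundary conditions, so one must use a version of the maximum principle valid for bounded solutions (a Phragm\'en--Lindel\"of / growth-restriction argument), which is precisely why the statement assumes $\varphi$ to be bounded. Second, the equation is non-local through $\psi=b^{-1}e^{\int_{x_*}^x\varphi}$, so running the comparison requires checking that this coupling preserves monotonicity; the decisive facts are the strict positivity of $\psi$ together with $\partial_x c\ge 0$ from (A2), which make the non-local source act on $\varphi$ through the sign-definite coefficient $-\psi\,\partial_x c\le 0$ and, combined with $\partial_x\alpha_0\ge 0$, allow one to compare the zeroth-order terms at a putative extremum with their values at $\overline{\varphi}$ (respectively $\underline{\varphi}$).
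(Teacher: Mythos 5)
Correct, and essentially the paper's own argument: the paper likewise freezes the non-local term as the positive known coefficient $\psi(x,\tau)=b(T-\tau)^{-1}e^{\int_{x_*}^x\varphi(\xi,\tau)\,d\xi}$ (so that the equation becomes local with $\varphi$ appearing through $\psi(\varphi\,\partial_x c+\partial_x^2 c)$), uses (A2) to show the constants $\underline{\varphi},\overline{\varphi}$ are ordered sub-/supersolutions, and invokes the parabolic comparison principle of \cite{Protter} for the bounded solution. The differences are presentational only: the paper keeps the divergence-form operator ${\mathcal L}(\varphi)=-\partial_\tau\varphi+\partial_x\left(\partial_x\alpha(\cdot,\varphi)-\alpha(\cdot,\varphi)\varphi\right)-\varphi\,\psi\,\partial_x c$ and verifies the chain ${\mathcal L}(\underline{\varphi})\ge\psi\,\partial_x^2 c={\mathcal L}(\varphi)\ge{\mathcal L}(\overline{\varphi})$ instead of expanding into non-divergence form, while your extra remarks (that (A2) forces $\partial_x\alpha_0\ge 0$, and that a Phragm\'en--Lindel\"of version of the principle is what is really needed on all of $\R$) are sound refinements that the paper leaves implicit.
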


\noindent P r o o f. Let $\psi(x,\tau)$ be a $C^2$ smooth nonnegative function, $\psi(x,\tau)\ge 0$. Let us define the parabolic operator:
\[
{\mathcal L}(\varphi)\equiv
-\partial_\tau \varphi + \partial_x\left(\partial_x\alpha(\cdot,\varphi) - \alpha(\cdot,\varphi)\varphi\right) - \varphi\,\psi\,\partial_x c\,.
\]
Since $\alpha(x,\tau,\varphi)=\tilde\alpha(\varphi) + \alpha_0(x,\tau)$, we have ${\mathcal L}(\tilde\varphi) = 
\partial_x^2 \alpha_0 - \tilde\varphi\partial_x\alpha_0 - \tilde\varphi\,\psi\,\partial_x c$ for any constant function $\tilde\varphi\in\R$. Thus, for  constant functions $\underline{\varphi}, \overline{\varphi}$ and nondecreasing function $c$, and  $\alpha_0$ satisfying assumption (A2) we have  
\[
{\mathcal L}(\underline{\varphi}) \ge -\underline{\varphi} \psi \partial_x c \ge  \psi \partial^2_x c \ge -\overline{\varphi} \psi \partial_x c \ge {\mathcal L}(\overline{\varphi}).
\]
Now let $\varphi$ be a solution to (\ref{eq_PDEphi-cons}). Then ${\mathcal L}(\varphi) = \psi \partial^2_x c$ where 
$\psi(x,\tau)=\frac{e^{\int_{x_*}^x\varphi(\xi,\tau)d\xi}}{b(T-\tau)} > 0$. Note the fact that $\partial_x\psi=\varphi\,\psi$. Hence the bounded solution $\varphi$ satisfies the following inequalities: 
${\mathcal L}(\underline{\varphi}) \ge {\mathcal L}(\varphi) \ge {\mathcal L}(\overline{\varphi})$. 

If the initial condition satisfies the inequalities
$\underline{\varphi} \le \varphi(x,0) = -u''(x)/u'(x) \le \overline{\varphi}$ for any $x\in\R$ then, applying the parabolic comparison principle (cf. \cite{Protter}), we obtain that the bounded solution $\varphi$ to (\ref{eq_PDEphi-cons}) satisfies the inequalities $\underline{\varphi} \le \varphi(x,\tau) \le \overline{\varphi}$ for any $\tau\in[0,T]$ and $x\in\R$, as claimed. \hfill$\diamondsuit$

By $H^{k+\lambda}(\Omega)$, $0<\lambda<1$, we denote the Banach space consisting of all uniformly continuous functions $\varphi$ on $\bar\Omega=[x_L, x_R]$ whose $k$-th derivative is uniformly $\lambda$-H\"older continuous, i.e. the H\"older semi-norm $\langle \varphi \rangle^{(\lambda)}_k =
\sup_{x,y\in\Omega, x\not= y} |\partial^k_x\varphi(x) - \partial^k_x\varphi(y)|/|x-y|^\lambda$ is finite. Let $Q_T=\Omega \times (0,T)$ be a bounded cylinder. Following Ladyzhenskaya \emph{et al.} \cite{LSU} we introduce the parabolic H\"older space  $H^{2k+\lambda, k+\lambda/2}(Q_T)$ consisting of all continuous functions $\varphi:Q_T\to\R$ such that functions $\partial^k_\tau\varphi, \partial^{2k}_x\varphi$ are $\lambda$-H\"older continuous in the $x$-variable and $\lambda/2$-H\"older continuous in the $\tau$-variable.

\begin{theorem}\label{existence}
Let $\Omega=(x_L,x_R)$ be a bounded interval. Assume $\alpha(x,\tau,\varphi)$ is $C^2$ smooth in the $t\in[0,T]$ and $x\in\Omega$ variables,  $C^{1,1}$ smooth in the $\varphi$ variable, and such that $0<\alpha^\prime_- \le \alpha^\prime_\varphi(x,\tau,\varphi) \le \alpha^\prime_+ <\infty$ for any $x\in\Omega,\tau\in[0,T]$, and $\varphi\ge \varphi_{min}$. Assume  $c\in H^{2+\lambda, 1+\lambda/2}(\Omega_T)$ for some $0<\lambda<1/2$, and $c(x,t)$ is a non-decreasing function in the $x$ variable.

If the initial conditions $\varphi(\cdot,0), \psi(\cdot,0)\in H^{2+\lambda}(\Omega)$, then there exists a classical solution $(\varphi, \psi)$ to the system of quasi-linear parabolic equations (\ref{eq_PDEphi-cons-phi})--(\ref{init_PDEphi-cons-phi-psi}) satisfying prescribed Dirichlet boundary conditions at $x_L, x_R$. Moreover, $\eta,\psi\in H^{2+\lambda, 1+\lambda/2}(\Omega_T)$ where $\eta(x,\tau)=\alpha(x,\tau,\varphi(x,\tau))$. The function $\tau\mapsto \partial_\tau\varphi(x,\tau)$ is $\lambda/2$-H\"older continuous for all $x\in\R$ whereas $x\mapsto\partial_x\varphi(x,\tau)$ is Lipschitz continuous for all $\tau\in[0,T]$.

\end{theorem}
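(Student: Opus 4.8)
The plan is to construct the solution pair $(\varphi,\psi)$ by a Schauder fixed-point argument, drawing the required a-priori estimates and compactness from the linear and quasi-linear parabolic Schauder theory of Ladyzhenskaya, Solonnikov and Uraltseva \cite{LSU}. The structure of the system \eqref{eq_PDEphi-cons-phi}--\eqref{eq_PDEphi-cons-psi} is favorable for this. By hypothesis (and in accordance with Theorem~\ref{smootheness}) the leading coefficient obeys $0<\alpha^\prime_-\le\alpha^\prime_\varphi\le\alpha^\prime_+<\infty$, so both equations are uniformly parabolic on $\Omega_T=\Omega\times(0,T)$. Moreover, the $\varphi$-equation is quasi-linear and in divergence form, and couples to $\psi$ only through the source $\partial_x(\psi\,\partial_x c)$, whereas the $\psi$-equation is semilinear, its only $\psi$-nonlinearity being the zeroth-order term $-\psi^2\partial_x c$, with coefficients $\alpha^\prime_\varphi$, $\alpha^\prime_\varphi\varphi+\alpha$, $\alpha^\prime_x$ depending on $\varphi$.

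First I would set up the fixed-point map on a closed, bounded, convex subset $K$ of the parabolic H\"older space $H^{1+\lambda,(1+\lambda)/2}(\Omega_T)$, consisting of functions $\hat\varphi$ attaining the prescribed initial datum $-u''/u'$ and the prescribed Dirichlet data. Given $\hat\varphi\in K$, I would freeze the coefficients at $\hat\varphi$ and first solve the resulting semilinear problem for $\psi=\psi[\hat\varphi]$; here the parabolic maximum principle keeps $\psi$ positive and bounded (the damping term $-\psi^2\partial_x c$ with $\partial_x c\ge0$ prevents blow-up, and truncating $\psi^2$ to a globally Lipschitz nonlinearity, to be removed afterwards by these very bounds, makes the problem solvable by standard semilinear theory). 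I would then insert $\psi[\hat\varphi]$ into the source and solve the linear, frozen-coefficient divergence-form problem for $\varphi=\mathcal T(\hat\varphi)$, so that any fixed point $\hat\varphi=\varphi$ together with $\psi[\varphi]$ solves the coupled system \eqref{eq_PDEphi-cons-phi}--\eqref{init_PDEphi-cons-phi-psi}. Throughout I would impose first-order compatibility of the data at the parabolic corners $\{x_L,x_R\}\times\{0\}$ so that the linear solutions lie in $H^{2+\lambda,1+\lambda/2}(\Omega_T)$.

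To apply Schauder's theorem (equivalently, the Leray--Schauder theorem) I would verify continuity, compactness, and a uniform a-priori bound. Linear Schauder estimates bound $\|\mathcal T(\hat\varphi)\|_{H^{2+\lambda,1+\lambda/2}(\Omega_T)}$ uniformly for $\hat\varphi\in K$; since $H^{2+\lambda,1+\lambda/2}(\Omega_T)$ embeds compactly into $H^{1+\lambda,(1+\lambda)/2}(\Omega_T)$, the image is relatively compact, and continuity of $\mathcal T$ follows from continuous dependence of the linear solutions on their frozen coefficients and sources together with the Lipschitz dependence of $\alpha,\alpha^\prime_\varphi,\alpha^\prime_x$ on $\varphi$. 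The indispensable ingredient is a uniform $L^\infty$ bound on $\varphi$, which in the present bounded-domain Dirichlet setting follows from the parabolic maximum principle and which, under the structural Assumption~(A), reduces to the explicit comparison bounds $\underline\varphi\le\varphi\le\overline\varphi$ of Theorem~\ref{th-bounds}; this bound, with the maximum-principle bounds for $\psi$, gives the self-mapping/boundedness needed to conclude the existence of a fixed point.

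The delicate point, and in my view the main obstacle, is the regularity of $\varphi$: by Example~\ref{examplealpha} the second derivative $\alpha^{\prime\prime}_{\varphi\varphi}$ is in general only piecewise continuous, so the $\varphi$-equation cannot be recast in non-divergence form and $\varphi$ cannot be bootstrapped directly to $H^{2+\lambda}$ in $x$. I would circumvent this by passing to the auxiliary unknown $\eta=\alpha(\cdot,\varphi)$. Since $\alpha^\prime_\varphi\ge\alpha^\prime_->0$, the map $\varphi\mapsto\alpha(x,\tau,\varphi)$ is a bi-Lipschitz change of variable with inverse $\varphi=\beta(x,\tau,\eta)$ satisfying $\partial_\eta\beta=1/\alpha^\prime_\varphi$, and a direct computation recasts \eqref{eq_PDEphi-cons-phi} as a quasi-linear \emph{non-divergence} equation for $\eta$, schematically $-\partial_\tau\eta+\alpha^\prime_\varphi\,\partial_x^2\eta=F$, whose coefficients and right-hand side $F$ are H\"older continuous (the source $\partial_x(\psi\,\partial_x c)$ lies in $H^{\lambda,\lambda/2}(\Omega_T)$ because $\psi\in H^{2+\lambda,1+\lambda/2}(\Omega_T)$ and $c\in H^{2+\lambda,1+\lambda/2}(\Omega_T)$). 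Quasi-linear Schauder theory then yields $\eta\in H^{2+\lambda,1+\lambda/2}(\Omega_T)$. From the identities $\partial_x\varphi=(\partial_x\eta-\alpha^\prime_x)/\alpha^\prime_\varphi$ and $\partial_\tau\varphi=(\partial_\tau\eta-\alpha^\prime_\tau)/\alpha^\prime_\varphi$, each a ratio of Lipschitz, respectively $\lambda/2$-H\"older, functions over a strictly positive denominator, I obtain that $x\mapsto\partial_x\varphi$ is Lipschitz and $\tau\mapsto\partial_\tau\varphi$ is $\lambda/2$-H\"older; feeding this regularity of $\varphi$ back into the semilinear $\psi$-equation and invoking linear Schauder estimates gives $\psi\in H^{2+\lambda,1+\lambda/2}(\Omega_T)$, establishing all the claimed regularity.
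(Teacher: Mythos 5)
There is a genuine gap at the heart of your fixed-point construction. You define $\mathcal{T}(\hat\varphi)$ by solving a frozen-coefficient, divergence-form linear problem for $\varphi$ and then claim that ``linear Schauder estimates bound $\|\mathcal T(\hat\varphi)\|_{H^{2+\lambda,1+\lambda/2}(\Omega_T)}$ uniformly.'' Those estimates are not available. The frozen leading coefficient $a(x,\tau)=\alpha^\prime_\varphi(x,\tau,\hat\varphi(x,\tau))$ is only Lipschitz in $x$: its derivative $\partial_x a=\alpha^{\prime\prime}_{x\varphi}(\cdot,\hat\varphi)+\alpha^{\prime\prime}_{\varphi\varphi}(\cdot,\hat\varphi)\,\partial_x\hat\varphi$ is merely bounded measurable, because $\alpha$ is only $C^{1,1}$ in $\varphi$ (indeed, by Example~\ref{examplealpha} the second derivative $\alpha''$ has jump discontinuities). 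Consequently the frozen \emph{linear} equation cannot be rewritten in non-divergence form with $H^{\lambda,\lambda/2}$ coefficients, and classical Schauder theory does not produce $H^{2+\lambda,1+\lambda/2}$ solutions or bounds for $\mathcal{T}(\hat\varphi)$. This is exactly the obstruction you yourself identify two paragraphs later (``the $\varphi$-equation cannot be recast in non-divergence form''), but it afflicts the frozen linear problem just as much as the nonlinear one, so the boundedness/compactness step of your Schauder (or Leray--Schauder) argument collapses; your $\eta$-substitution arrives too late to save it. The cure is to make the change of unknown $\eta=\alpha(\cdot,\varphi)$ \emph{before} constructing the solution, not as a post-processing regularity device. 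This is precisely what the paper does: it rewrites the system for $\Phi=(\eta,\psi)$ in the non-divergence form $\partial_\tau\Phi=\delta(\cdot,\eta)\partial^2_x\Phi+F(\cdot,\Phi)$, where $\delta(\cdot,\eta)=\alpha^\prime_\varphi(\cdot,\beta(\cdot,\eta))$ is a Lipschitz function of the unknown with $0<\alpha^\prime_-\le\delta\le\alpha^\prime_+$, applies a $C^{2+\lambda}$ regularization of $\eta\mapsto\delta(\cdot,\eta)$, invokes the quasi-linear existence theory of \cite{LSU} for the regularized system to obtain a weak solution $\Phi\in W^{2,1}_2(Q_T)$, uses the embedding $W^{2,1}_2(Q_T)\hookrightarrow H^{\lambda,\lambda/2}(Q_T)$ for $0<\lambda<1/2$, and only then bootstraps, viewing $\Phi$ as a solution of a linear equation whose coefficients now lie in $H^{\lambda,\lambda/2}(Q_T)$, to conclude $\eta,\psi\in H^{2+\lambda,1+\lambda/2}(Q_T)$ via \cite[Theorem 12.2, Chapter III]{LSU}. (A repair of your scheme along your own lines would be to run the fixed point in the variable $\eta$, or to replace the claimed Schauder bounds for $\mathcal T$ by parabolic $L^p$ estimates, which tolerate bounded measurable first-order coefficients.)

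A secondary weakness is your a-priori $L^\infty$ bound: you invoke Theorem~\ref{th-bounds}, but that theorem requires the separable structure of Assumptions (A1)--(A2), which is not among the hypotheses of Theorem~\ref{existence}; in the Leray--Schauder framework the uniform bound on fixed points of the homotopy must be justified without it (the paper sidesteps the issue entirely by working with the uniformly bounded coefficient $\delta$ and the $W^{2,1}_2$ theory rather than pointwise comparison for $\varphi$). On the positive side, your final bootstrap paragraph --- recasting the $\varphi$-equation as $-\partial_\tau\eta+\delta\,\partial^2_x\eta=F$ and reading off that $x\mapsto\partial_x\varphi$ is Lipschitz and $\tau\mapsto\partial_\tau\varphi$ is $\lambda/2$-H\"older from $\varphi=\beta(\cdot,\eta)$ with $\beta^\prime_\eta=1/\alpha^\prime_\varphi$ --- is correct and is essentially the same computation by which the paper defines $\delta$ and $F$; it would survive intact once the construction itself is repaired.
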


\begin{proof}
The proof is analogous to that of \cite[Theorem 5.3]{KilianovaSevcovicANZIAM} where we proved existence of H\"older smooth solutions in the case when $c\equiv 0$. The methodology of the proof is based on the Schauder type of estimates (c.f. \cite{LSU}). 

Since the diffusion function $\alpha$ need not be $C^{2+\lambda}$ smooth in the $\varphi$ variable we first rewrite the system (\ref{eq_PDEphi-cons-phi})--(\ref{init_PDEphi-cons-phi-psi}) using the auxiliary function $\eta=\alpha(\cdot,\varphi)$. Then $\varphi=\beta(\cdot,\eta)$ where $\beta$ is the inverse function to the strictly increasing function $\alpha$, i.e. $\varphi=\beta(\cdot, \alpha(\cdot, \varphi))$. Moreover, $\beta^\prime_\eta = 1/\alpha^\prime_\varphi$ and $\partial_\tau\varphi = \beta^\prime_\eta \partial_\tau\eta + \beta^\prime_\tau$. Then system (\ref{eq_PDEphi-cons-phi})--(\ref{init_PDEphi-cons-phi-psi})  can be rewritten in the form:
\begin{equation}
\partial_\tau\Phi = \delta(\cdot, \eta) \partial^2_x\Phi + F(\cdot,\Phi),
\quad \Phi(x,0)=\Phi_0,
\label{Phi}
\end{equation}
where $\Phi=(\eta,\psi)$, $\delta(\cdot,\eta)= \alpha^\prime_\varphi(\cdot, \beta(\cdot, \eta))=1/\beta^\prime(\cdot,\eta)$, and
\[
F(\cdot,\Phi) = \bigl(-\delta [ \partial_x(\eta\beta) + \partial_x(\psi\partial_x c) + \beta^\prime_\tau],\ -(\beta/\beta^\prime_\eta +\eta)\partial_x\psi +\alpha^\prime_x \psi -\psi^2\partial_x c\bigr).
\]
Note that there are constants $\alpha^\prime_\pm$ such that $0<\alpha^\prime_- \le \delta \le \alpha^\prime_+ <\infty$. Applying a  $C^{2+\lambda}$ regularization of the function $\eta\mapsto\delta(\cdot,\eta)$ and following the proof of \cite[Theorem 5.3]{KilianovaSevcovicANZIAM} and the result on existence of classical solutions to the regularized equation (c.f. \cite[Ch. V, pp. 495-496]{LSU}) we conclude existence of a weak solution $\Phi\in W^{2,1}_2(Q_T)$ of (\ref{Phi}) satisfying the prescribed Dirichlet boundary and initial conditions.  Recall that the parabolic Sobolev space $W^{2,1}_2(Q_T)$ consists of all square integrable functions $\Phi\in L_2(Q_T)$ such that  $\partial_x\Phi, \partial^2_x\Phi, \partial_\tau\Phi \in L_2(Q_T)$. The space $W^{2,1}_2(Q_T)$ is continuously embedded into the H\"older space $H^{\lambda, \lambda/2}(Q_T)$ for $0<\lambda<1/2$ (c.f. \cite{LSU}).
The rest of the proof is based on a simple bootstrap argument. A weak solution $\Phi\in W^{2,1}_2(Q_T)$ is a solution to the linear equation: 
\begin{equation}
\partial_\tau\Phi = \tilde\delta(x,\tau) \partial^2_x\Phi + \tilde B_1(x,\tau) \partial_x \Phi + \tilde B_0(x,\tau) \Phi,
\quad \Phi(x,0)=\Phi_0(x), 
\label{PhiLin}
\end{equation}
where the diffusion coefficient  $\tilde\delta(\cdot) = \delta(.,\eta(\cdot))$, and $2\times2$ matrices $\tilde B_1, \tilde B_0$ belong to $H^{\lambda, \lambda/2}(Q_T)$ because $\eta,\psi \in H^{\lambda, \lambda/2}(Q_T)$ and $\delta$ is a Lipschitz continuous function.  According to \cite[Theorem 12.2, Chapter III]{LSU} we have $(\eta,\psi)\equiv\Phi\in H^{2+\lambda, 1+\lambda/2}(Q_T)$ where $\eta=\alpha(\cdot,\phi)$. The proof of theorem now follows.
\end{proof}

%====================================================
\section{Numerical approximation scheme}
\label{sec:numscheme}

A numerical scheme that we propose for solving quasi-linear parabolic equation (\ref{eq_PDEphi-cons}) is based on a semi-implicit in time approximation method. Spatial discretization is based on a finite volume approximation scheme (cf. LeVeque \cite{LeV}) combined with a nonlinear equation iterative solver method  proposed by Mikula and K\'utik in \cite{KutikMikula}. This methodology for solving the transformed HJB equation was proposed and analyzed in \cite{KilianovaSevcovicANZIAM} and \cite{KilianovaSevcovicKybernetika} for the case of absence of an  intertemporal utility function, i.e. $c=0$. In such a case, analysis of the experimental order of convergence suggested  the second order of convergence with respect to the spatial discretization step (see \cite{KutikMikula}, \cite{KilianovaSevcovicANZIAM}). 

Equation (\ref{eq_PDEphi-cons}) belongs to a wide class of quasi-linear parabolic equations of the general form:
\begin{equation}
\partial_\tau \varphi =  \partial_{x}^2 A(x,\tau,\varphi) + \partial_{x} B(x,\tau,\varphi)+ C(x,\tau,\varphi), 
\label{eq_numerics_general}
\end{equation}
satisfying the initial condition $\varphi(x,0)= -u''(x)/u'(x)$ where  $x\in\R, \tau\in(0,T)$. Here $A(x,\tau,\varphi)=\alpha(x,\tau,\varphi),\quad  B(x,\tau,\varphi)= - \alpha(x,\tau,\varphi)\varphi$,  and 
\[
C(x,\tau,\varphi) = -\frac{e^{\int_{x_*}^x \varphi(\eta,\tau)d\eta}}{b(T-\tau)} 
\biggl( \varphi(x,\tau) \partial_x c(x,T-\tau) +\partial^2_x c(x,T-\tau)\biggr).
\]

\subsection{A semi implicit time-space discretization of the transformed non-local parabolic equation}

Since the original spatial domain for the $x$ variable is unbounded, we first truncate it into  a bounded computational domain $[x_L, x_R]$ and we use uniform spatial discretization mesh points $x_i = x_L+ i h$ for
$i=0,\cdots,n+1$ where $h=(x_R-x_L)/(n+1)$. Thus $x_0 =x_L$ and $x_{n+1}=x_R$. Following the idea of dual finite volumes (cf. \cite{LeV}), the inner mesh points $x_i$, $i=1,\cdots,n$, are the centers of dual finite volumes $(x_{i - \frac{1}{2}}, x_{i+\frac{1}{2}})$. In what follows, the dual volumes will be denoted by $(x_{i-}, x_{i+})$, i.e. $x_{i\pm} = x_{i\pm\frac{1}{2}}$. Clearly, $h=x_{i+} - x_{i-}$. The time discretization levels are set to $\tau^j=j k, j=0, \cdots, m$, where $k=T/m$ and $m$ is the number of time discretization  steps. If we integrate equation (\ref{eq_PDEphi-cons}) over dual finite volumes, apply the midpoint rule on the left-hand side integral and approximate the time derivative by the Euler forward finite difference, we arrive at the following system of equations:
\begin{equation}
\varphi_i^{j+1}  = \frac{k}{h} (I_i + J_i) + \varphi_i^j\,, \quad
i=1, \cdots ,n,\ j=0, \cdots, m-1,
\end{equation}
where 
\begin{equation}
I_i =\int_{x_{i-}}^{x_{i+}} \partial_x \left(\partial_x A + B\right) {\d}x
=  \left[A^\prime_x + A^\prime_{\varphi} \partial_x\varphi + B\right]_{x=x_{i-}}^{x=x_{i+}},
\label{intI}
\end{equation}
and the integral $\int_{x_{i-}}^{x_{i+}} C  {\d}x$ over the interval $(x_{i-}, x_{i+})$ is approximated by means of the mid-point rule integration, that is 
\begin{equation}
J_i = \int_{x_{i-}}^{x_{i+}} C  {\d}x
\approx 
-h \frac{e^{\int_{x_*}^{x_i} \varphi(\eta,\tau)d\eta}}{b(T-\tau)} 
\biggl( \varphi_i \partial_x c(x_i,T-\tau) +\partial^2_x c(x_i, T-\tau)\biggr)
\,.
\label{intJ}
\end{equation}

Let us denote 
\[
D_{i\pm}^j = A^\prime_{\varphi}(x_{i\pm}, \tau^j, \varphi_{i\pm}^j), 
\quad 
E_{i\pm}^j = A^\prime_x(x_{i\pm},\tau^j,\varphi_{i\pm}^j), 
\quad
F_{i\pm}^j = B(x_{i\pm},\tau^j,\varphi_{i\pm}^j),
\]
and approximate the derivatives $\partial_x\varphi$ at dual mesh points $x_{i\pm}$ by the central differences:
\[
\partial _x \varphi|_{i+}^j \approx \frac{\varphi_{i+1}^j-\varphi_i^j}{h},
\quad
\partial _x \varphi|_{i-}^j \approx \frac{\varphi_i^j-\varphi_{i-1}^j}{h}.
\]

Let us fix a point $x_*=x_{i_*}$ for some spatial index $i_*$. As for the integral  $\int_{x_*}^x \varphi(\eta,\tau)d\eta$ appearing in the non-local term $J_i$ at time layer $j$ (denoted as $J_i^j$) we use the trapezoidal integration rule:
\[
\int_{x_*}^{x_i} \varphi^j(\eta,\tau)d\eta \approx \Phi^j_i - \Phi^j_{i_*}
\]
where $\Phi^j_i = \frac{h}{2} (\varphi^j(x_L) + 2\varphi^j(x_1) + \dots + 2\varphi^j(x_{i-1}) + \varphi^j(x_i))$, which can be efficiently calculated recursively as follows:
\[
\Phi_1^j = \frac{h}{2} (\varphi^j(x_L) + \varphi^j(x_1)), \qquad \Phi^j_{i+1} = \Phi^j_i + \frac{h}{2}(\varphi^j(x_i) + \varphi^j(x_{i+1}))\quad \text{ for } i=1,\cdots,n-1.
\]
Hence
\begin{equation}
J_i^j = 
-h \frac{e^{\Phi^j_i - \Phi^j_{i_*}}}{b^j} 
\biggl( \varphi_i^j \partial_x c(x_i,T-\tau^j) +\partial^2_x c(x_i,T-\tau^j)\biggr).
\label{intJdiscr}
\end{equation}
Here  $b^j$ is a discrete explicit/implicit  Euler approximation of the solution $b(T-\tau^j)$ to the ODE (\ref{ODEsystem-cons-b}): $-db/d\tau = \omega b -\partial_x c(x_*,T-\tau)$, i.e. 
\begin{equation}
b^{j+1} = (1 -  k \omega^j) b^j + k \partial_x c(x_*, T-\tau^j), \quad b^0 = u'(x_*), \ j=0, \cdots, m-1, 
\label{bdiscr}
\end{equation}
when treated explicitly, or
\begin{equation}
b^j = \frac{1}{1+k \omega^j} (b^{j-1} + k \partial_x c(x_*, T-\tau^j)), \quad b^0 = u'(x_*), \ j=1, \cdots, m, 
\label{bdiscr_implicit}
\end{equation}
when treated implicitly. Here $\omega^j = \omega(T-\tau^j) =(\partial_x\alpha - \alpha\varphi)|_{x=x_*}$ can be approximated by 
\begin{eqnarray*}
\omega^j &=& \alpha^\prime_x(x_*, \tau^j, \varphi_{i_*}^j) 
+ \alpha^\prime_\varphi (x_*, \tau^j, \varphi_{i_*}^j) \frac{\varphi_{i_*+1}^j-\varphi_{i_*-1}^j}{2h}
 - \alpha(x_*, \tau^j, \varphi_{i_*}^j)\varphi_{i_*}^j
\\
&=& E^j_{i^*} + D^j_{i^*} \frac{\varphi_{i_*+1}^j-\varphi_{i_*-1}^j}{2h} + F^j_{i^*} . 
\end{eqnarray*}
To compute a solution at the new time layer $\tau^{j+1}$, we take the terms $D_{i\pm}^j, E_{i\pm}^j, F_{i\pm}^j$ from the previous time layer $\tau^j$ and the terms $\partial_x \varphi|_{i\pm}^{j+1}$ from the new layer $\tau^{j+1}$. Rearranging the new-layer terms to the left-hand side and the old-layer terms to the right-hand side, we obtain a tridiagonal system of linear algebraic equations:
\begin{eqnarray}
-\frac{k}{h^2}D_{i+}^j \varphi_{i+1}^{j+1}
&+& (1+\frac{k}{h^2}(D_{i+}^j+D_{i-}^j)) \varphi_i^{j+1} -
\frac{k}{h^2} D_{i-}^j \varphi_{i-1}^{j+1}
\nonumber\\
&=& \frac{k}{h}(J_i^j + E_{i+}^j- E_{i-}^j + F_{i+}^j - F_{i-}^j)
+ \varphi_i^j\,,
\label{discretescheme}
\end{eqnarray}
which can be efficiently and fastly solved by the Thomas algorithm.

We assume Neumann boundary conditions at the boundaries $x_L, x_R$. More precisely, $\partial_x \varphi(x,\tau) = 0 \text{ at } x=x_L, x_R$,
for all $\tau \in (0,T]$. The boundary conditions can be deduced from the asymptotic behavior of equation (\ref{eq_PDEphi-cons}) for $x\to\pm \infty$. After discretization, these boundary conditions attain the form:
$$
\varphi_0^{j} = \varphi_1^j, \qquad \varphi_{n+1}^j  = \varphi_n^j.
$$

\subsection{Comparison with policy iteration method for solving HJB equations}

In this section we discuss comparison of the numerical approximation scheme (\ref{discretescheme}) and the fixed policy iteration method for solving HJB equation investigated by Huang {\it et al.} \cite{Huang2010} and Reisinger and Witte \cite{Reisinger}.

Denote $V^j=V(\cdot, T-\tau^j), c^j=c(\cdot, T-\tau^j)$. Then the time implicit time discretization of the HJB equation (\ref{eq_HJB}) can be written as follows:
\begin{equation}
-\frac{V^j-V^{j-1}}{k}
+ \max_{\bmtheta\in\Delta} \left(\mu(\cdot,\bmtheta)\partial_x V^j
+\frac{1}{2}\sigma(\cdot,\bmtheta)^2 \partial^2_x V^j\right) + c^j =0, \quad V^0=u,
\label{discr-HJB}
\end{equation}
for $j=1,\cdots, m$. That is,
\begin{equation}
-\frac{V^j-V^{j-1}}{k}
- \left(- \mu(\cdot,\bmtheta^j)\partial_x V^j
+\frac{1}{2}\sigma(\cdot,\bmtheta^j)^2 \partial^2_x V^j\right) + c^j =0, 
\label{discr-HJB-A}
\end{equation}
where
\begin{equation}
\bmtheta^j= \arg\min_{\bmtheta\in\Delta} \left(-\mu(\cdot,\bmtheta)\partial_x V^j
- \frac{1}{2}\sigma(\cdot,\bmtheta)^2 \partial^2_x V^j\right).
\label{discr-theta}
\end{equation}
The fixed policy iteration method consists of replacing $\bmtheta^j$ by $\bmtheta^{j-1}$ in (\ref{discr-HJB-A}) and solving a linear equation for $V^j$, i.e. 
\[
-\frac{V^j-V^{j-1}}{k}
- \left(- \mu(\cdot,\bmtheta^{j-1})\partial_x V^j
-\frac{1}{2}\sigma(\cdot,\bmtheta^{j-1})^2 \partial^2_x V^j\right) + c^j =0. 
\]
Since 
\begin{eqnarray*}
- \mu(\cdot,\bmtheta^{j-1})\partial_x V^j
&-&\frac{1}{2}\sigma(\cdot,\bmtheta^{j-1})^2 \partial^2_x V^j
= (- \mu(\cdot,\bmtheta^{j-1})
 +\frac{1}{2}\sigma(\cdot,\bmtheta^{j-1})^2 \varphi^j  ) \partial_x V^j
\\
&&= 
(- \mu(\cdot,\bmtheta^{j-1})
 +\frac{1}{2}\sigma(\cdot,\bmtheta^{j-1})^2 \varphi^{j-1}  + 
\frac{1}{2}\sigma(\cdot,\bmtheta^{j-1})^2 (\varphi^j-\varphi^{j-1})
) \partial_x V^j
\\
&&=
(\alpha(\cdot, \varphi^{j-1}) + \alpha^\prime_\varphi(\cdot,\varphi^{j-1}) (\varphi^j-\varphi^{j-1}) ) \partial_x V^j,
\end{eqnarray*}
the fixed policy iteration method for solving HJB equation (\ref{eq_HJBtransf})
 corresponds to the numerical solution of the  transformed equation (\ref{eq_PDEphi-cons}) by means of the semi-implicit scheme (\ref{discretescheme})  in which $\alpha$ is approximated by its linearization at $\varphi^{j-1}$ from the previous time step $\tau^{j-1}$. 

\begin{remark}
The main advantage of our method is twofold. First, we work with a transformed function $\varphi$ representing risk aversion of the investor. The boundary conditions for a truncated domain can be set up in a natural way, e.g. homogeneous Neumann boundary conditions. For the original problem formulated in terms of the intertemporal value function $V$, one can expect unbounded exponential like solution and numerical problems when  working small and large values of $V$ and treatment of boundary conditions for $V$. 

Secondly, the advantage consists of the possibility of evaluation of the value function $\alpha$ in a fast and efficient way instead of computation of $\bmtheta^{j}$ in (\ref{discr-theta}). This might be useful when treating problems involving convex conic programming, e.g. worst-case portfolio selection problem, for which one can use efficient tools for solving  convex conic programming optimization problems (cf. \cite{KilianovaTrnovska}).

\end{remark}

\section{Computational results and conclusions}

\subsection{Numerical benchmark to a traveling wave solution}

Suppose that the value function $\alpha$ depends only on the $\varphi$ variable (e.g. we set $\varepsilon=0$, $r=0$ in \eqref{volatility}--\eqref{drift}) and the intertemporal utility function is given as follows: 
\[
c(x,t)= W(x-v(T-t)), \quad \hbox{where}\quad 
W(\xi) = \left(-v +\alpha(-u''(\xi)/u'(\xi))\right)\, u'(\xi).
\] 
Note that $c(x,T-\tau)= W(x-v\tau)$. Here $v\in\R$ is a given constant traveling wave  speed. Then the function $V(x,t)=u(x-v(T-t))$ satisfies the equation:
\begin{eqnarray*}
\partial_t V(x,t) &-& \alpha\left(-\partial^2_x V(x,t)/\partial_x V(x,t) \right) \partial_x V(x,t)
\\
&=& - \left( - v + \alpha(-u''(x+v(T-t))/u'(x-v(T-t)))\right) u'(x-v(T-t))
\\
&=& - W(x-v(T-t)) = - c(x,t),
\end{eqnarray*}
i.e. $V(x,t)$ is a solution to the HJB equation (\ref{eq_HJBtransf}) and $V(x,T)=u(x)$. Hence the function 
\begin{equation}
\varphi(x,\tau) = -u''(x-v\tau)/u'(x-v\tau)
\label{travelingwavesol}
\end{equation}
is a traveling wave solution to (\ref{eq_PDEphi-cons}) satisfying the initial condition 
$\varphi(x,0) = -u''(x)/u'(x)$. The explicit solution of the form (\ref{travelingwavesol}) can be used to test our numerical approximation scheme. As a testing example one can consider utility and value functions of the form:
\[
u(x) = \arctan(x), \quad \alpha(\varphi)=\varphi - 1/(\varphi+2).
\]
Then $u$ represents a convex-concave utility function with variable absolute risk aversion $a(x)$ given by
\[
a(x)= -\frac{u''(x)}{u'(x)} = \frac{2x}{1+x^2}.
\]
If we set $c(x,t)=  W(x-v(T-t))$ then $\varphi(x,\tau) = a(x-v\tau)=-u''(x-v\tau)/u'(x-v\tau)$ is a solution to (\ref{eq_PDEphi-cons}) satisfying the initial condition $\varphi(x,0)= a(x) = 2x/(1+x^2)$. Consequently, $V(x,t)= u(x-v (T-t))$ is the traveling wave solution to the HJB equation (\ref{eq_HJBtransf}). The traveling wave solution $\varphi$ is depicted in Fig.~\ref{fig:tw-solutions} (left) for times $\tau^j = j T/10, j=0,\cdots, 10$, where $T=1$ and $v=5$. As for a numerical solution, we considered the truncated computational domain $[x_L,x_R] = [-20,20]$ and Dirichlet boundary conditions $\varphi(x_L,\tau)=a(x_L-v\tau),\  \varphi(x_R,\tau)=a(x_R-v\tau)$, for all $\tau>0$, which coincide with exact values of the explicit solution.

\begin{figure}
    \centering
    \includegraphics[width=0.45\textwidth]{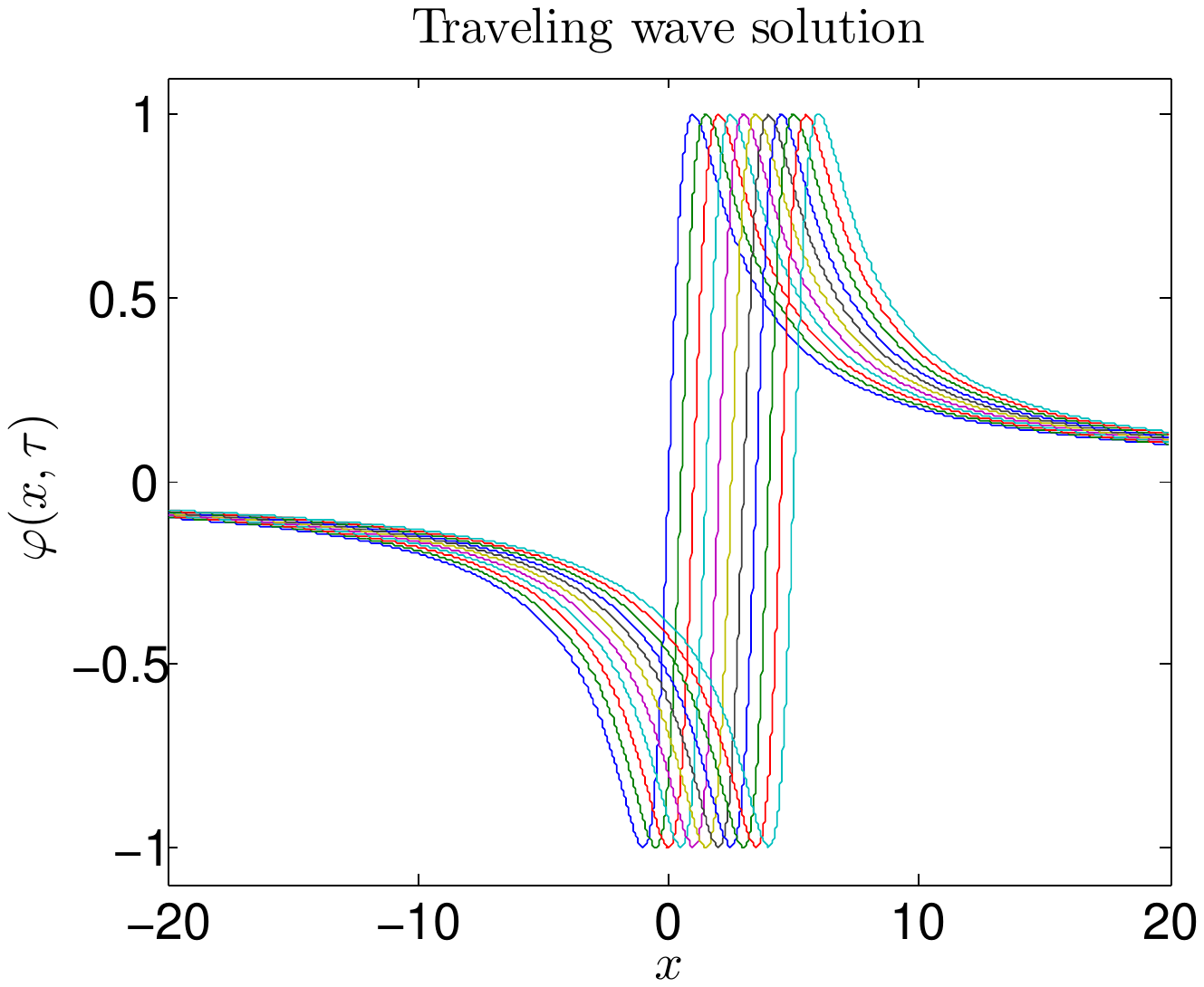}
    \includegraphics[width=0.43\textwidth]{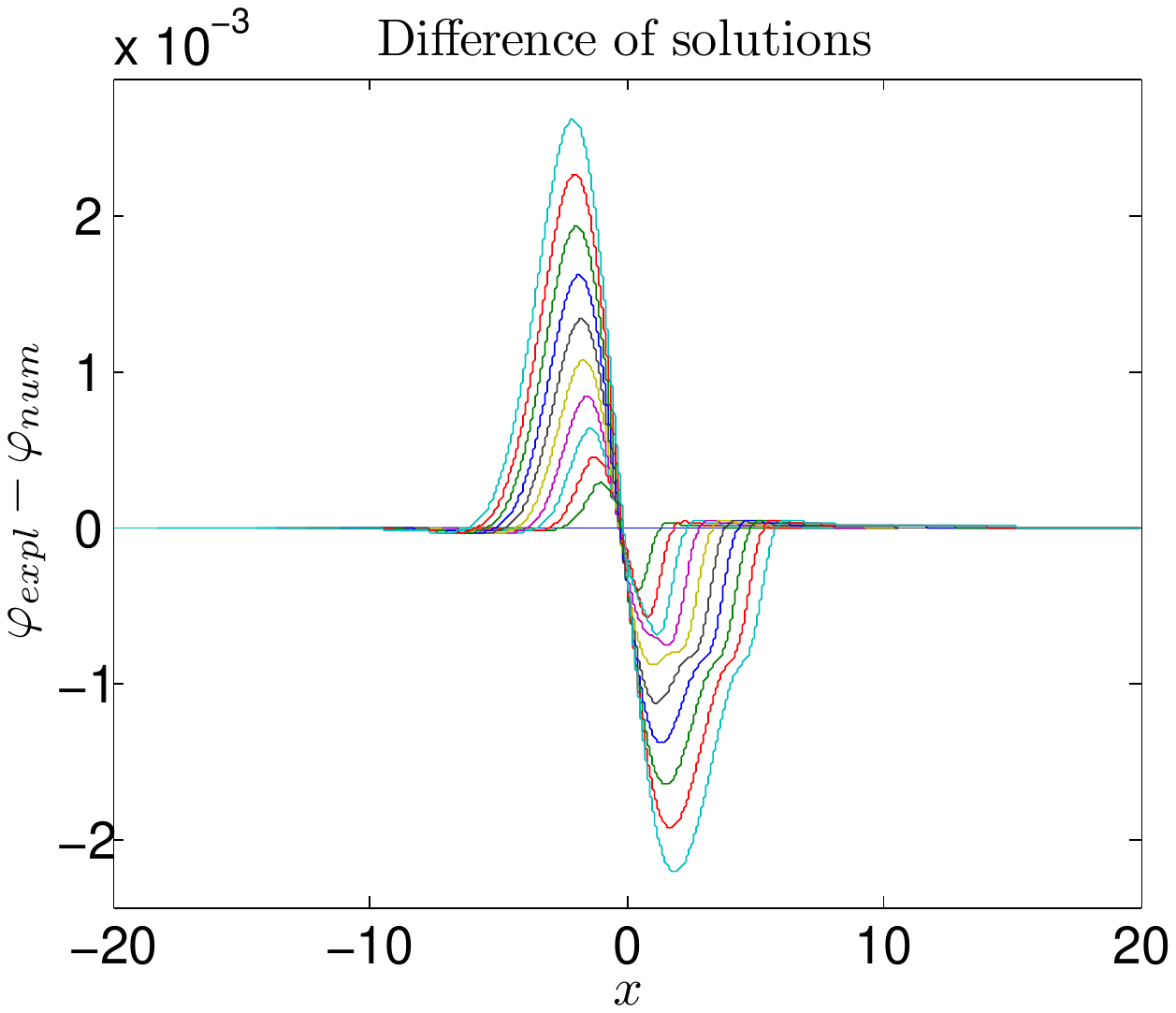}
    \caption{(Left) A graph of the traveling wave solution $\varphi(x,\tau)$ and (right) graph of the difference between the explicit and numerical solution for times $\tau=j T/10,\  j=0,\cdots,10$, and parameters $T=1, v=5, h=0.01$. }
    \label{fig:tw-solutions}
\end{figure}

Let $\varphi_{expl}$ be the explicit traveling wave solution and $\varphi_{num}$ be the numerical solution constructed by means of our approximation scheme presented in Section~5. The $L_2$ and $L_\infty$ discrete norms are defined as follows:
\[
\Vert\varphi\Vert_{L_2}= \sqrt{h\sum_i{\varphi_i^2}},
\quad 
\Vert\varphi\Vert_{L_\infty}= \max_i|\varphi_i|,
\]
and the error between solutions as
\[
error_{\infty,p}(h)=\Vert\varphi_{expl} - \varphi_{num}\Vert_{L_\infty((0,T):L_p)}=
\max_{\tau^j}\Vert\varphi_{expl}(\cdot,t^j) - \varphi_{num}(\cdot, t^j)\Vert_{L_p}, \quad p=2,\infty.
\]
We consider the following relation between spatial and time discretization steps: $k=h^2$.
Supposing $error(h)=O(h^\delta)$, estimation of the order parameter $\delta$ can be obtained by means of the experimental order of convergence (or convergence ratio). It can be defined with respect to the norm of the space $L_\infty((0,T): L_2(x_L,x_R))$ as follows:
\[
EOC_j = \frac
{\ln (error(h_{j+1})/error(h_j))}
{\ln (h_{j+1}/h_{j})}, \quad j=1, \cdots, J,
\]
where $h_1>h_2 > \cdots > h_J$.  The results of computation of EOCs are summarized in Table~\ref{tab:eoc}.  

The numerical results indicate the second order of convergence of the proposed numerical method. This is in accordance with the earlier result of the authors in \cite{KilianovaSevcovicANZIAM} where we showed the same order of experimental convergence  for the special case when  $c=0$.

An example of the difference of explicit and numerical solution  $\varphi_{expl} - \varphi_{num}$ is depicted in Fig.~\ref{fig:tw-solutions} (right). We can observe that the error is largest where the function $\varphi$ is steep.

\begin{table}
\caption{\label{tab:eoc} The $L_{\infty}((0,T):L_2(x_L,x_R))$ and $L_{\infty}((0,T):L_{\infty}(x_L,x_R))$ norm of the error of the numerical
solution with the spatial step $h$ and time step $k=h^2$ and the exact
traveling wave solution. Corresponding experimental orders of convergence.}
%\begin{center}
%\begin{tabular}{l|c|c|c|c}
%   $h$  & $L_{\infty}((0,T):L_2)$-err & %$EOC_{k=h^2}$  & %$L_{\infty}((0,T):L_{\infty}))$-err & %$EOC_{k=h^2}$\\ \hline \hline
%   0.05 &  0.1189  &  --  &  0.0586 &  --  \\
%   0.025 & 0.0321  &  1.8891  &  0.0159 &  1.9063 % \\
%   0.0125 &  0.0082  &  1.9689  &  0.0041  &  51.9553  \\
%   0.01 &  0.0053  &  1.9558   &   %0.0026  &   2.0412  \\
%   0.001 &     &      &      &      \\
%\end{tabular}
%\end{center}

\begin{center}
\begin{tabular}{l||c|c||c|c}
   $h$  & $L_{\infty}((0,T):L_2)$-err & $EOC_{k=h^2}$  & $L_{\infty}((0,T):L_{\infty}))$-err & $EOC_{k=h^2}$\\ \hline \hline
   0.05 &  1.1886e-01  &  --  &  5.8577e-02 &  --  \\
   0.025 & 3.2102e-02 &  1.8885  &  1.5919e-02  &  1.8796  \\
   0.0125 & 8.1969e-03 &  1.9695  &  4.0718e-03 &  1.9670  \\
   0.01 & 5.2598e-03  &  1.9882   &   2.6133e-03 &   1.9874  \\
   0.005 &  1.3196e-03   &   1.9949   &   0.6558e-03   &    1.9945  \\
   \hline
\end{tabular}
\end{center}

\end{table}

\subsection{Dynamic portfolio optimization example}

Now we illustrate the solution of the proposed scheme on an example of dynamic portfolio optimization. Following Kilianov\'a and \v{S}ev\v{c}ovi\v{c} 
\cite{KilianovaSevcovicANZIAM, KilianovaSevcovicKybernetika}, we consider a stochastic dynamic portfolio optimization problem for a portfolio consisting of 30 stocks forming the German DAX30 stock index from August 2010 to April 2012. We chose the same data set as in  \cite{KilianovaSevcovicANZIAM, KilianovaSevcovicKybernetika} for the purpose of comparison.  As for the drift and volatility function, we will assume their form:
\[
\mu(x,t,\bmtheta) = \bmmu^T\bmtheta - \frac12 \bmtheta^T \bmSigma  \bmtheta  +\varepsilon e^{-x} ,
\quad
\hbox{and} 
\quad 
\sigma(x,t,\bmtheta)^2 =\bmtheta^T \bmSigma  \bmtheta,
\]
where $\bmSigma$ is a positive definite covariance matrix.
The function $\alpha(x,\tau,\varphi)$ can be rewritten as follows: $\alpha(x,\tau,\varphi) = \tilde\alpha(\varphi) -\varepsilon e^{-x}$, 
where $\tilde\alpha$ is the value function of the parametric quadratic optimization problem
\begin{equation}
\tilde\alpha(\varphi) = \min_{ \bmtheta \in \Delta} 
\left(-\bmmu^T\bmtheta +  \frac{\varphi+1}{2} \bmtheta^T \bmSigma  \bmtheta \right)\,.
\label{eq_alpha_def_quadratic}
\end{equation}
A graphical example of the function $\tilde\alpha$ in which $\bmmu$ and $\bmSigma$ were obtained from the DAX30 data set is depicted in Figure \ref{fig:alpha_alphader_alphaderder}. We can observe jumps in the graph of the second derivative of $\alpha$. Indeed, according to Theorem~\ref{smootheness}, the function $\alpha$ is $C^{1,1}$ continuous only. Furthermore, jumps in $\alpha''$ correspond to the points $\varphi$ where the set of indices $\{i : \bmtheta_i>0\}$ with positive weights is enlarged by a new index (cf. \cite{KilianovaSevcovicANZIAM, KilianovaSevcovicKybernetika}).

As for the utility functions, we use
\[
u(x) = -e^{-ax}
\]
for the terminal utility and 
\[
c(x,t) = -\kappa e^{-d x - \varrho (T-t)}
\]
for the intertemporal utility. 

Utility functions parameters used in our example are: $a=9$, $\kappa=1$, $d\in\{0, 8, 11\}$, $\varrho=0$. Parameters corresponding to model data are: $\varepsilon=1$. Parameters of the numerical scheme are: $h=0.01$, $k=0.5h^2$, $x_L=-4$, $x_R=8$, $x_* = x_{200}= -2.01, i_*=200$. Note that the solution does not depend on $x_*$ so one can choose arbitrary $x_*$. Nevertheless, a suitable choice of $x_*$ is very important in order to stabilize numerical computation. The free parameter $x_*$ enters integral term in (\ref{eq_PDEphi-cons}) as well as the ODE for $b$, i.e. (\ref{ODEsystem-cons-b}). We calculate the function $\alpha$ for $\varphi \in (-1,15)$ with a fine division step  $h_{\varphi}=0.05$. The investment period is $T=1$.

\begin{figure}
    \centering
    \includegraphics[width=0.45\textwidth]{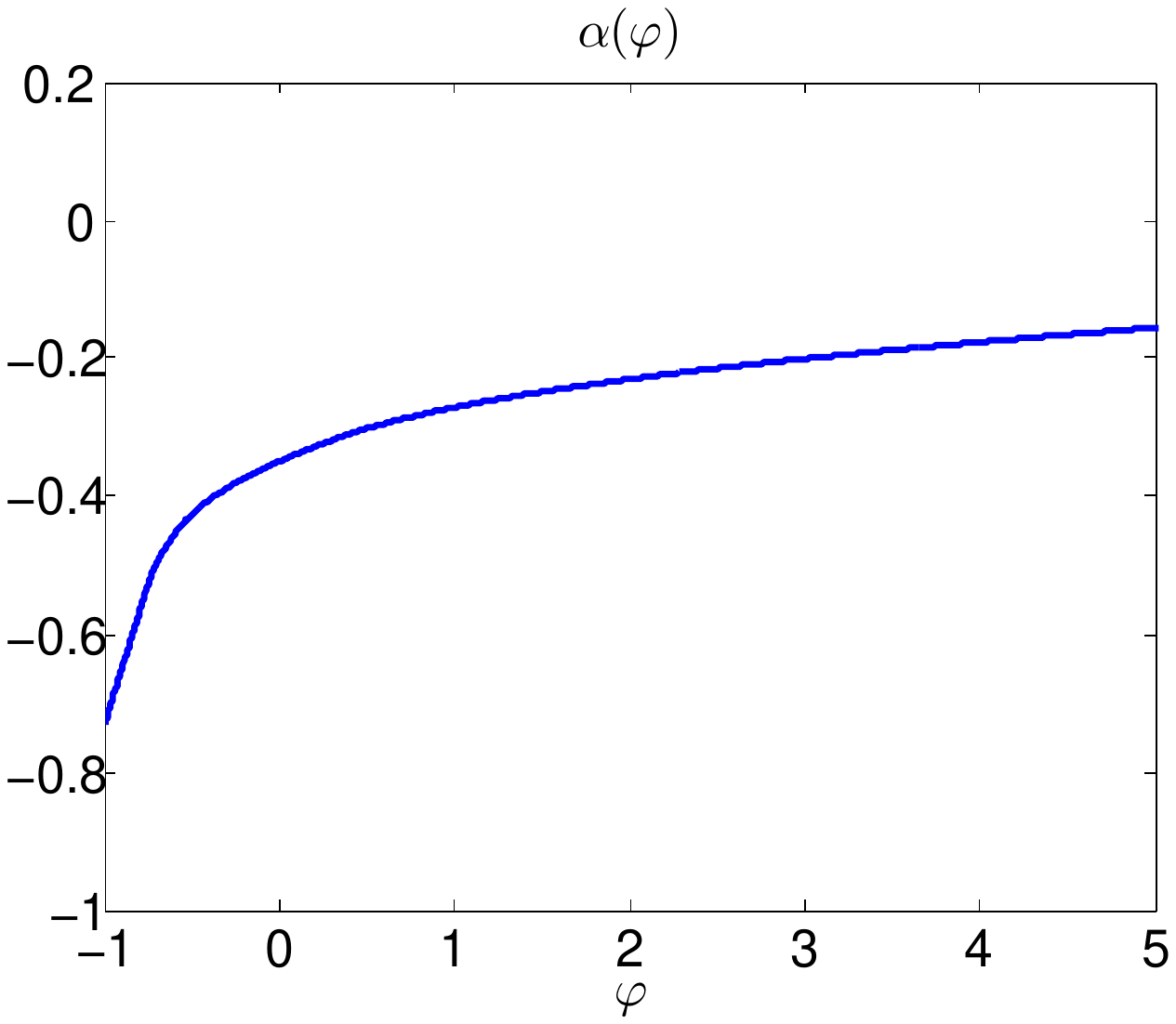}
    \includegraphics[width=0.44\textwidth]{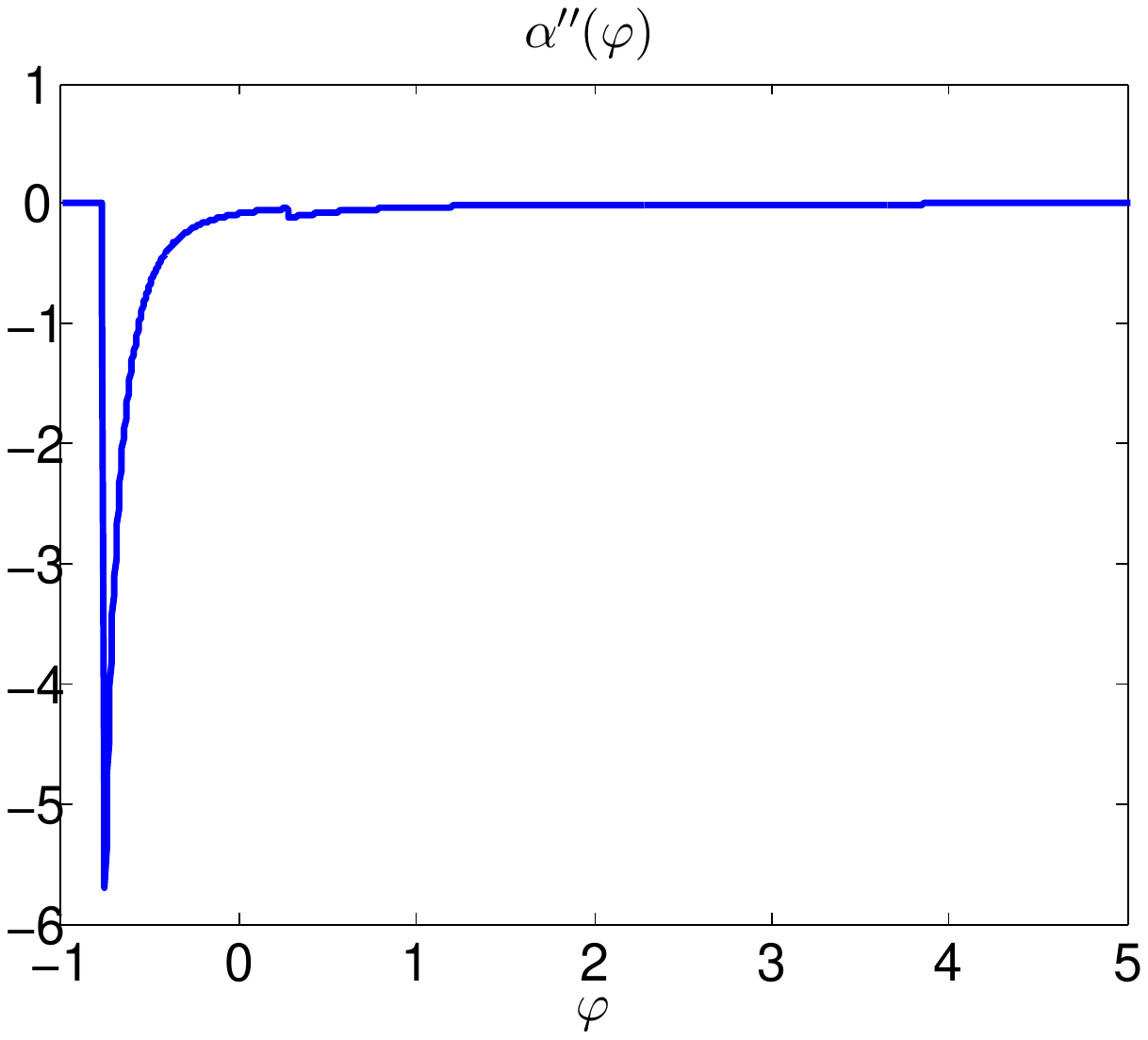}
    \caption{A graph of the value function $\tilde\alpha$ and its second derivative $\tilde\alpha''(\varphi)$ for a portfolio consisting of DAX30 stocks. Source: Kilianov\'a and \v{S}ev\v{c}ovi\v{c}  \cite{KilianovaSevcovicANZIAM, KilianovaSevcovicKybernetika}. }
    \label{fig:alpha_alphader_alphaderder}
\end{figure}

\begin{figure}
    \centering
    \includegraphics[width=0.45\textwidth]{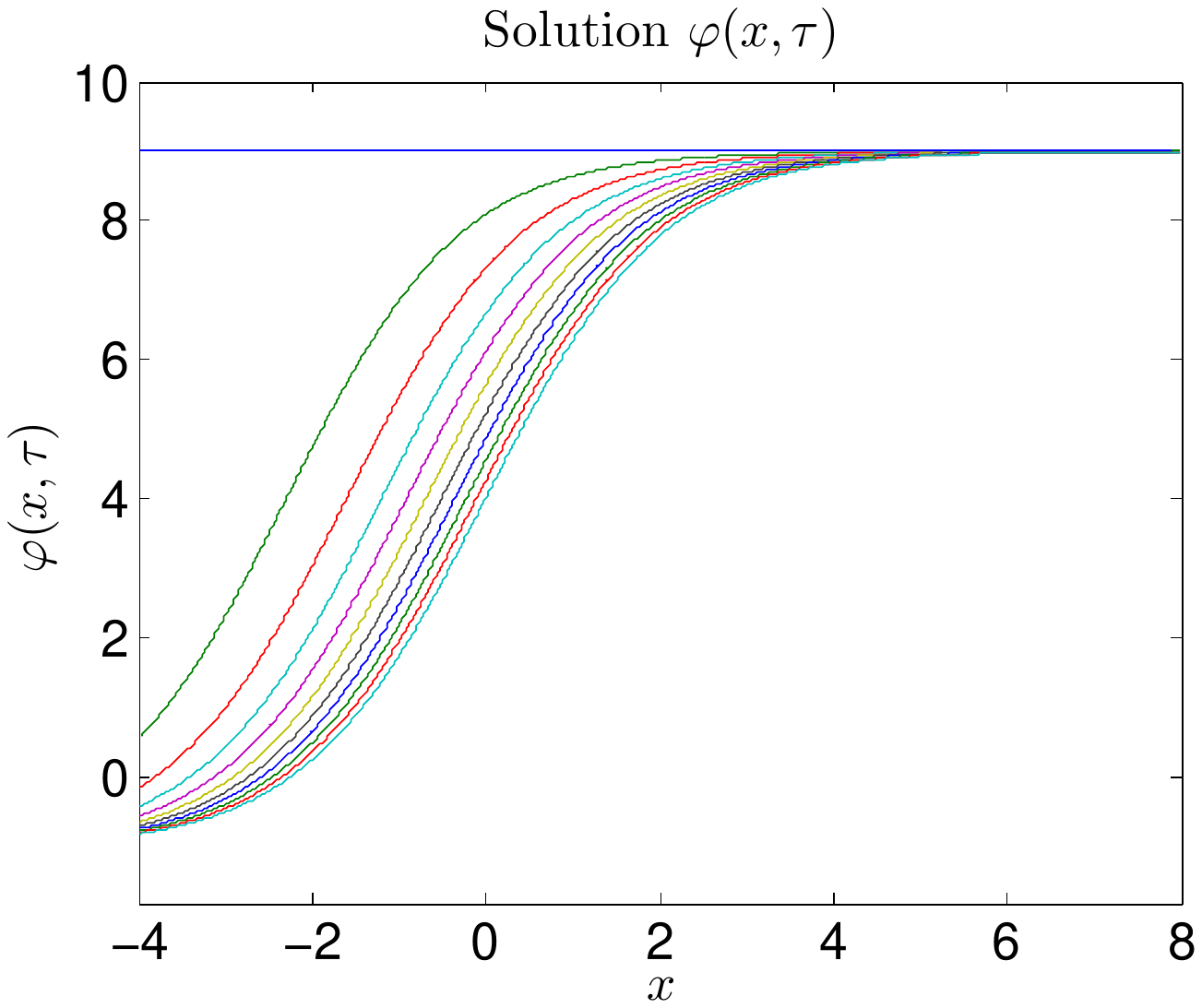}
    \quad
    \includegraphics[width=0.44\textwidth]{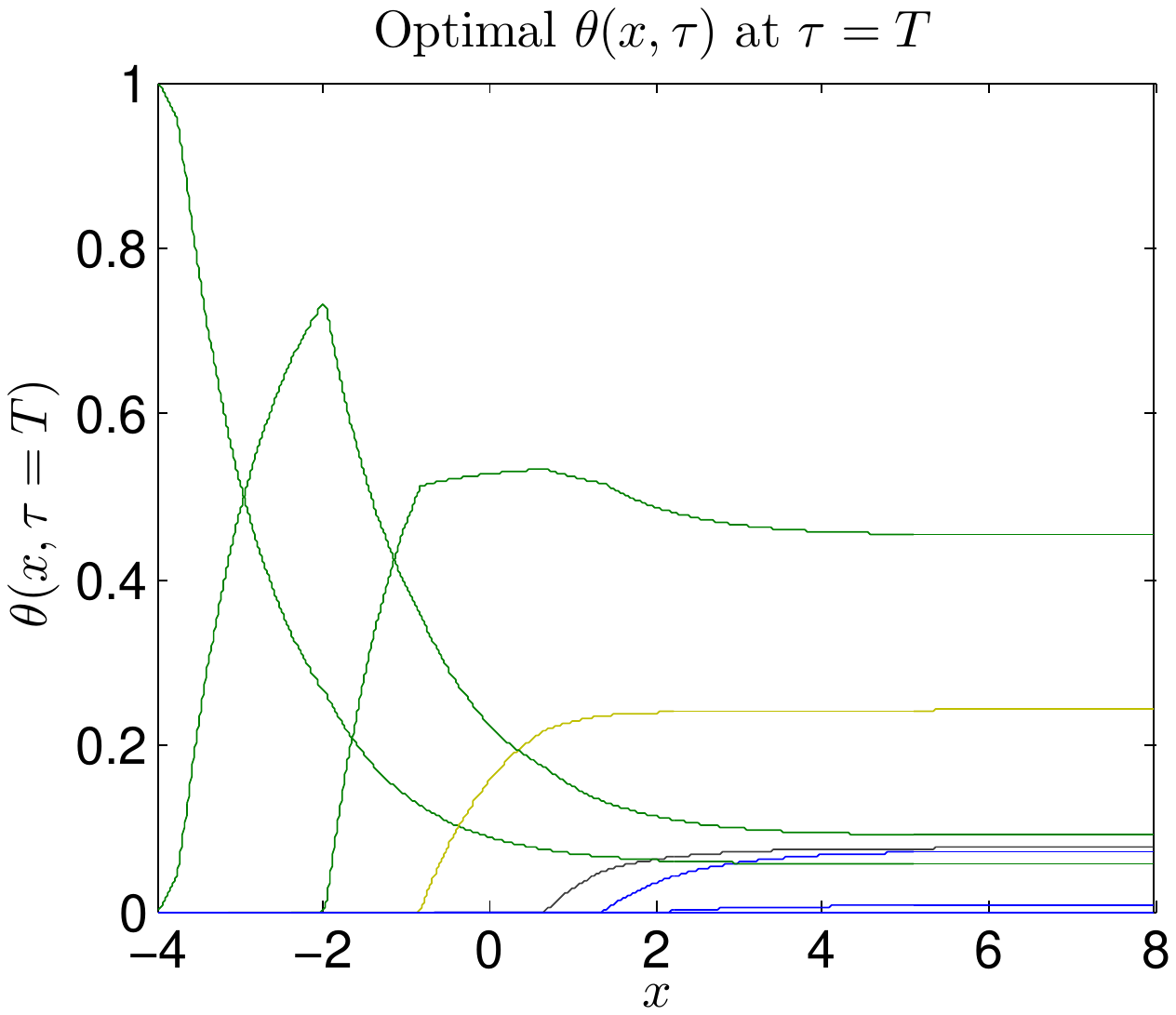} \\    
    \includegraphics[width=0.45\textwidth]{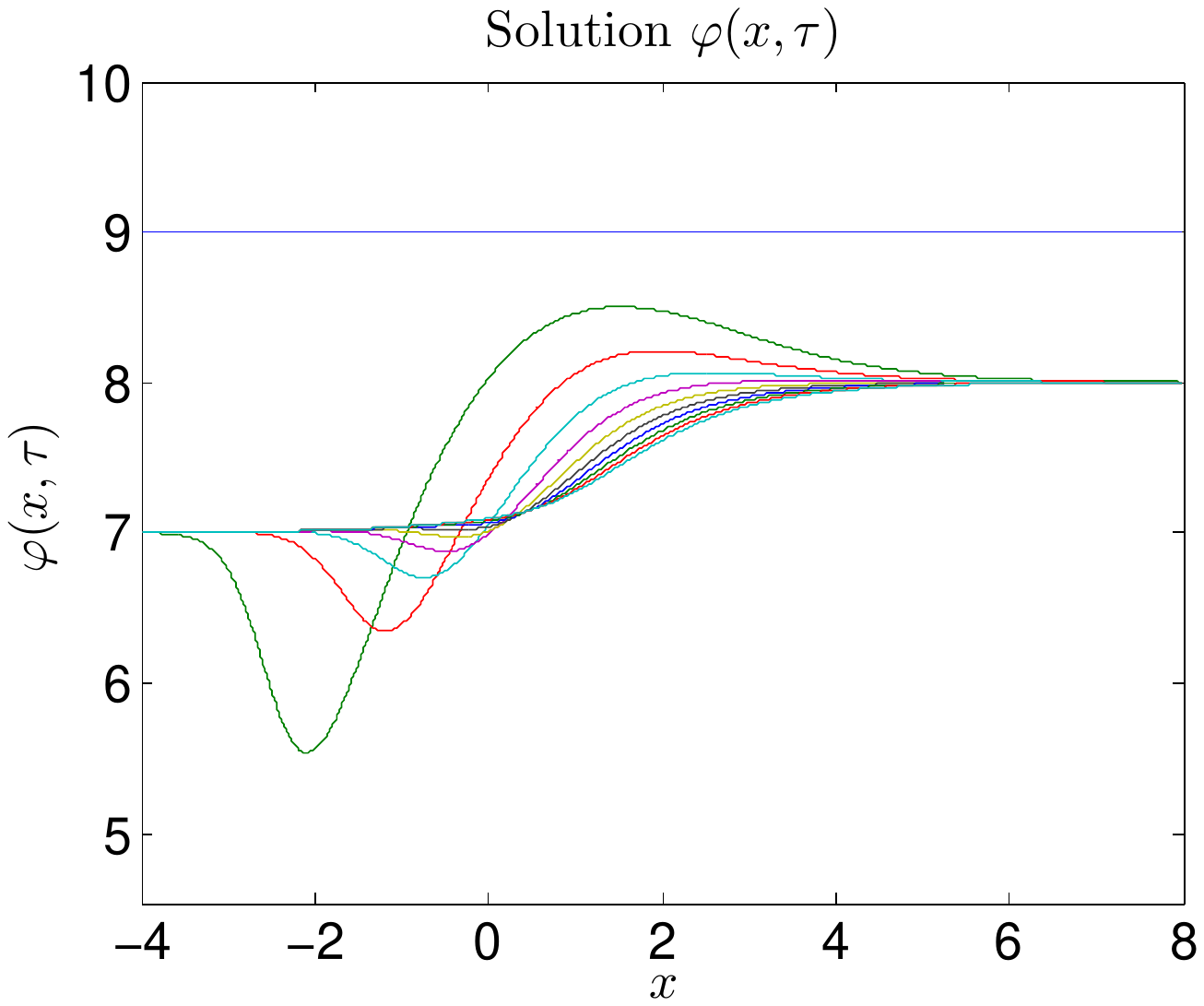}
    \quad
    \includegraphics[width=0.44\textwidth]{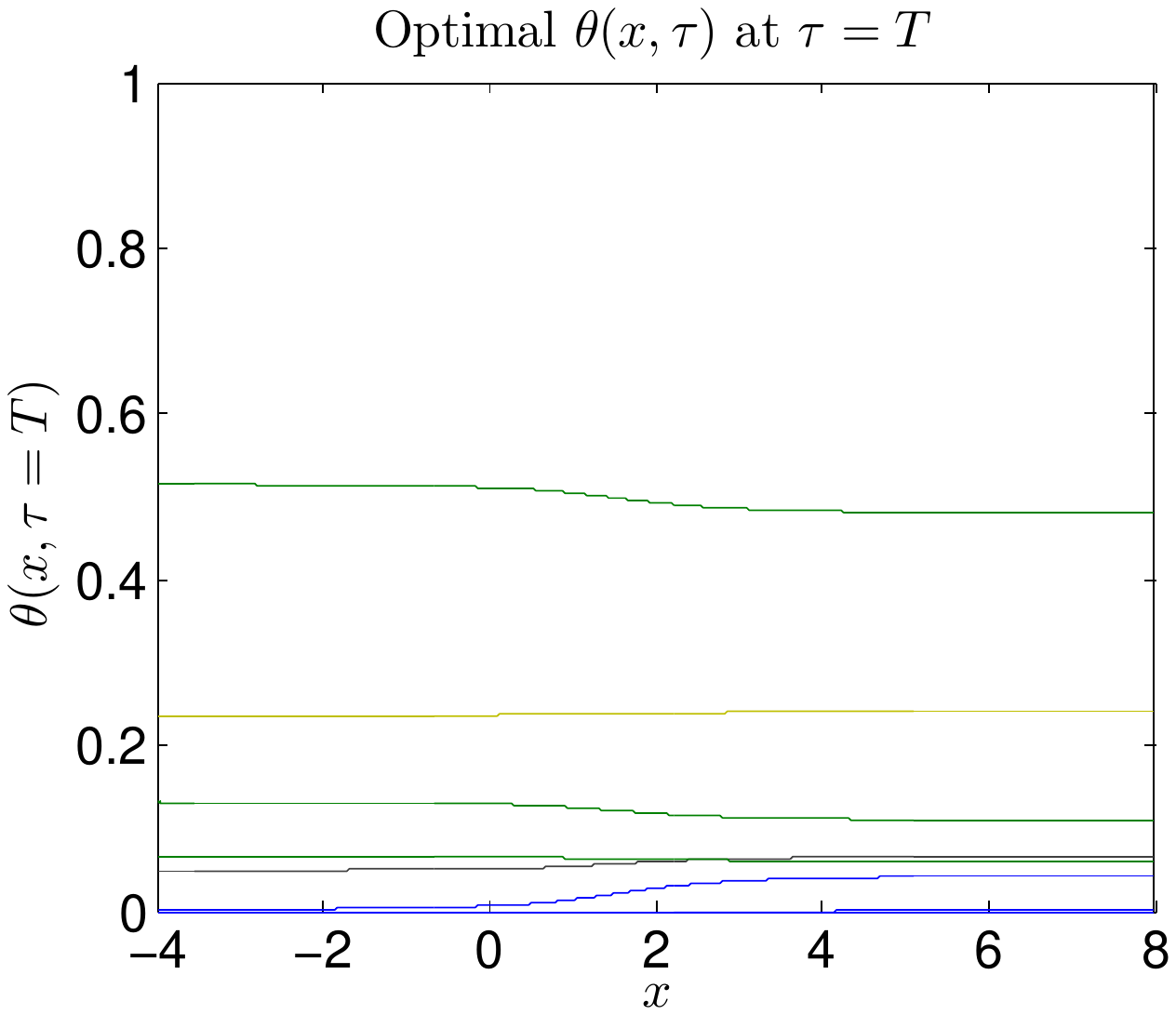} \\  
    \includegraphics[width=0.45\textwidth]{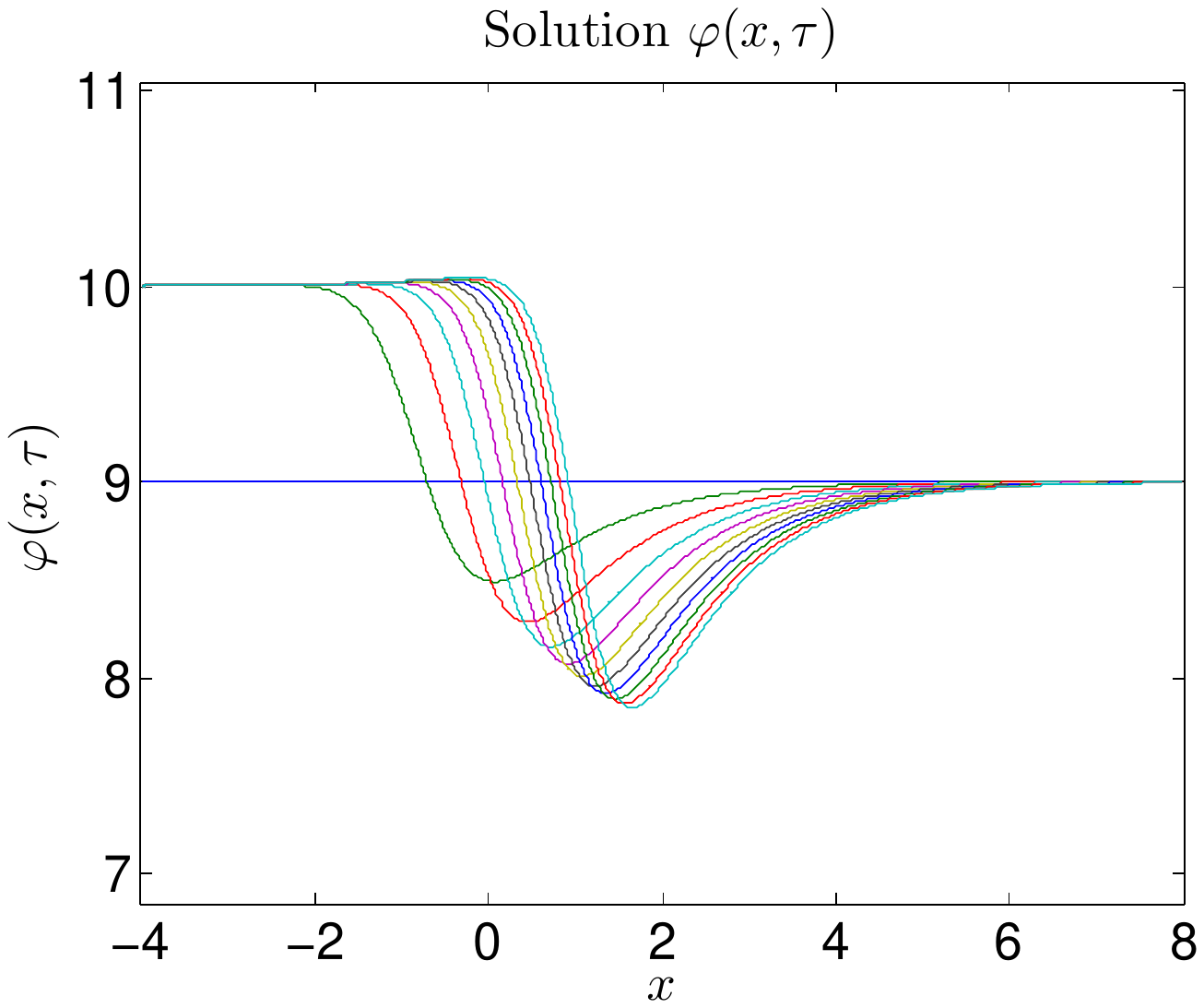}
    \quad
    \includegraphics[width=0.44\textwidth]{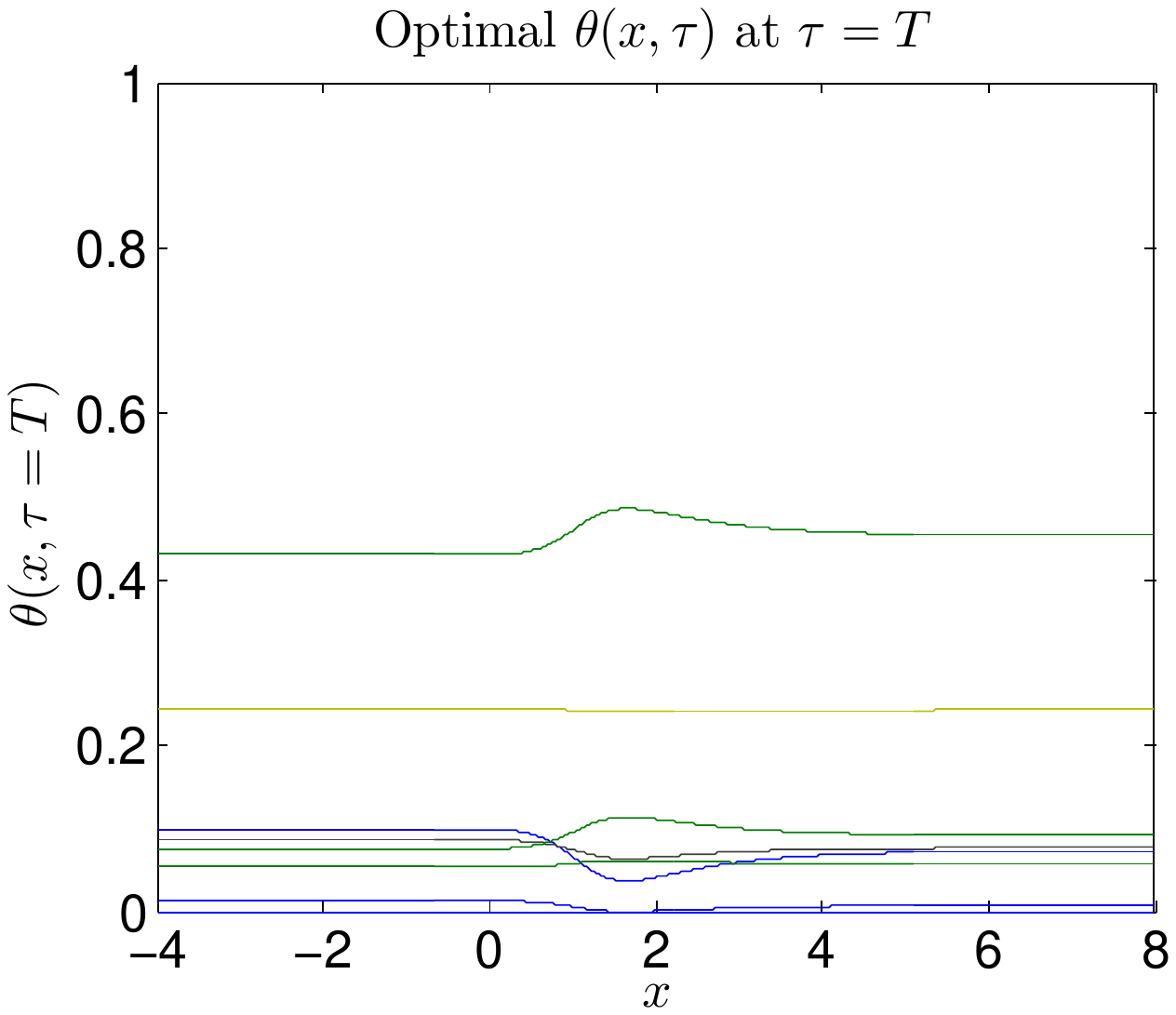}     
    \caption{A solution $\varphi(x,\tau)$ at time instances $jT/10$, $j=0,\cdots,10$, $T=1$, $h=0.01$, $k=0.5h^2$ and optimal portfolio weights $\bmtheta(x,\tau=T)$ for $d=0$ (top left). The constant blue line is the initial condition, then solutions $\varphi(x,\tau^j)$ move from left (green curve) to right for increasing $\tau^j$. Top right plot depicts dependence of active portfolio weights $\theta_i>0$ at $\tau=T$. Next rows correspond to $d=8$ (middle) and $d=11$ (bottom). }
    \label{fig:results_consumption}
\end{figure}

Figure \ref{fig:results_consumption} presents numerical results for $d=0$ (the case of a trivial intertemporal function $c\equiv 0$), $d=8$ and $d=11$. The main difference we can observe is that while for the problem without  intertemporal utility function we obtain a solution $\varphi(x,\tau)$ which is increasing on the interval $[x_L,x_R]$, in problems with a nontrivial intertemporal utility function the solution $\varphi(x,\tau)$ turns out to be non-monotone in $x$. It eventually becomes increasing in the $x$ variable when $\tau$ is approaching the maturity $T$. Furthermore, the range of values of $\varphi$ is a smaller interval when compared to the case without  intertemporal utility function. This has a practical consequence: as $\varphi$ has a small variation in the $x$ variable for $d\approx a$, so does the vector of optimal weights $\bmtheta$ (see Figure~\ref{fig:results_consumption}, right column). Notice that in the case when $\varepsilon=0, c\equiv 0$ there is a constant solution $\varphi(x,\tau)\equiv\varphi(x,0)=a$ to (\ref{eq_PDEphi-cons}) corresponding to the so-called Merton solution to the optimal portfolio selection problem (cf. \cite{Merton2}).  Note  that the solution $\varphi$ satisfies a-priori estimates derived in Theorem~\ref{th-bounds}. 

In summary, there is a non-trivial effect on optimal portfolio selection when considering a non-trival intertemporal utility function $c(x,t)$ which has a similar behavior as the terminal utility function $u(x)$. We furthermore showed that the optimal solution $\varphi(x,\tau)$ to the transformed HJB equation  (\ref{eq_PDEphi-cons}) need not be monotonically increasing. In terms of the optimal portfolio selection vector $\bmtheta$, the optimal weight $\ theta_i$ for some of the stocks entering the active set can attain local minimum with respect to the $x$ variable. Such a behavior cannot be observed in models without intertemporal utility and non-trivial portfolio inflow $\varepsilon>0$ investigated in the recent papers \cite{KilianovaSevcovicANZIAM, KilianovaSevcovicKybernetika}. 

As far as numerical aspects of the Riccati transformation method are concerned, we showed that using this transformation enables to solve a quasi-linear parabolic equation by means of modern numerical methods based on finite volume approximation in a more efficient way when compared to traditional numerical methods based on the  fixed policy iteration method or other explicit numerical approximation approaches the original HJB equation.

\section*{Acknowledgements}
The authors were supported by VEGA 1/0062/18 and DAAD ENANEFA-2018 grants.

\end{document}